\newcommand*{\rom}[1]{\expandafter\@slowromancap\romannumeral #1@}
\newlength{\dhatheight}
\newtheorem{remark}{Remark}
\newtheorem{theorem}{Theorem}
\newtheorem{proposition}{Proposition}
\newtheorem{Remark}{Remark}
\begin{document}
\title{Transmission Strategies for Cell-Free Massive MIMO with Limited-Capacity Fronthaul Links}

\author{\IEEEauthorblockN{\normalsize
    Hamed Masoumi and Mohammad Javad Emadi}
  \IEEEauthorblockA{\\ Electrical Engineering Department, Amirkabir University of Technology, Tehran, Iran\\
    E-mails: \{hamed\_masoomy, mj.emadi\}@aut.ac.ir}
    \vspace{-.75cm}
    }
\maketitle
\begin{abstract}
We study an uplink scenario of a cell-free massive multiple-input multiple-output (CF-mMIMO) system with limited capacity fronthaul links (LC-FHLs) connecting each access point (AP) to central unit (CU), where user equipments and APs are subject to hardware impairments. Therefore, to efficiently use the capacity of FHLs to maximize the achievable rate, we analyze three strategies for performing compression and forwarding of channel state information (CSI) and data signals over the LC-FHLs to the CU; Compress-forward-estimate (CFE), estimate-compress-forward (ECF), and estimate-multiply-compress-forward (EMCF). For CFE and EMCF achievable rates are derived, and for ECF one upper and lower bounds are presented which are tight for ideal hardwares and FHLs. Also for forwarding the quantized version of CSI and data signals of each user, low-complexity fronthaul capacity allocations are proposed for ECF and EMCF strategies, which considerably improve the performance of the system especially for limited capacity FHLs. Our results indicate that at high SNR regime and for large enough capacity of FHLs, estimating channels at the CU rather than APs result in smaller estimation error. Then, geometric programming power allocations are developed for CFE and ECF to maximize sum rates. Finally, to highlight performance characteristics of the system numerical results are presented.
\end{abstract}
\IEEEpeerreviewmaketitle

\textit{Index Terms}\textemdash CF-mMIMO, uplink, LC-FHL, compression, hardware impairments, achievable rate.
\section{Introduction}
\lettrine{M}{ASSIVE} MIMO is a key technology to answer the growing demands in the next generation of wireless networks. It offers high improvements in spectral and energy efficiencies (SE and EE), and accompanies near-optimal linear processing owing to the weak law of large numbers \cite{marzetta2010noncooperative,ngo2013energy}. Utilizing massive number of antennas distributed in an area and serving a smaller number of user equipments (UEs) in same time-frequency resources named as the CF-mMIMO system \cite{ngo2017cell}. Thus, thanks to the macro-diversity owing to distributed antennas, one can provide uniformly good services to all the UEs. However this achievement comes at the price of increased fronthaul data load and deployment cost.

CF-mMIMO system is studied in the literature from different perspectives. For instance, uplink (UL) and downlink (DL) achievable rates with maximum ratio combining (MRC) and conjugate beamforming (CB) are considered in seminal paper \cite{ngo2017cell}, and optimal power allocations are also proposed to maximize the minimum of users' rates. \cite{nayebi2017precoding} considers max min power control for the DL transmission with CB and zero-forcing (ZF) precodings. By applying CB and ZF precodings for DL scenario, EE is studied respectively in \cite{ngo2018total,nguyen2017energy}. Furthermore, CF-mMIMO system with;  Non-orthogonal multiple access \cite{li2018noma}, user-centric approach to minimize fronthaul load \cite{buzzi2017cell}, pilot power allocation to minimize channel estimation error \cite{mai2018pilot}, joint power control and load balancing using ZF and MRC techniques \cite{nguyen2018optimal}, and channel hardening and favorable propagation analysis via stochastic geometry \cite{chen2018channel}, are investigated in the literature.

In practice, communication networks suffer from number of non-idealities such as, carrier-frequency and sampling-rate offset, IQ-imbalance, phase-noise and non-linearity of analog devices. These issues can be partially compensated by use of high-quality hardwares or more complex processing which result in expensive and high power consuming implementations. However, there would still remain some non-negligible errors named as residual hardware impairments (HI) \cite{studer2010mimo}. For mMIMO setup, utilization of low-cost hardwares is essential to deploy a cost-efficient system. Therefore, analyzing effects of the non-negligible distortions caused by HI at transceivers on  performance of mMIMO system has gained research interests \cite{bjornson2014massive,bjornson2017massive,zhang2018performance,zhang2017spectral,bjornson2015massive,zhang2018spectral,zhu2017analysis,zhang2018scaling,papazafeiropoulos2018impact}. It is shown that utilization of non-ideal hardware in mMIMO system is the limiting factor for SE/EE and channel estimation accuracy particularly at high SNR \cite{bjornson2014massive}. Recently, for a CF-mMIMO system with MRC scheme, max-min power allocation in presence of HIs is studied in \cite{zhang2018performance}. \cite{bjornson2015massive} analyzes SE for UL transmission of a multi-cell mMIMO system, where only the receivers are subject to HIs. Impact of the HI on SE of a multi-pair two-way massive-antenna relay subject to HI only at the relay is studied in \cite{zhang2018spectral}. Moreover in presence of HIs, secrecy performance of the mMIMO systems \cite{zhu2017analysis}, scaling behavior of rate in a multi-cell mMIMO system with Rician fading \cite{zhang2018scaling}, and performance of the Rayleigh-product large MIMO channels \cite{papazafeiropoulos2018impact}, are investigated as well.

On the other hand, analyzing effects of limited capacity FHLs connecting each AP to the central processing unit is of interest from practical points of view. In \cite{kang2014joint}, a conventional multi-user MIMO system with limited capacity FHLs and perfect hardwares are analyzed and two  joint signal and CSI compression schemes for forwarding over the FHLs are studied. Recently, a CF-mMIMO system with limited capacity FHLs while assuming \emph{perfect hardwares} at UEs and APs, is studied in \cite{bashar2018cell}. For equal fronthaul capacity allocation for forwarding quantized CSI and data signals to the CU, optimal power allocation to maximize the minimum achievable data rates of UEs is investigated. Besides, compute-and-forward strategy is studied for a CF-mMIMO to decrease the data load on the FHLs \cite{huang2017compute}.

It worth mentioning that, in contrast to the previous works, this paper \emph{jointly} analyses the HIs and limited capacity FHLs on the achievable rate of a CF-mMIMO system and studies optimal resource allocations to maximize SE. Because of LC-FHLs, we investigate three strategies for joint data signal and CSI compression at APs, which are called CFE, EMCF and ECF, where EMCF and CFE have the highest and lowest computational complexities at the APs, respectively, and EMCF achieves the highest data rate. For each strategy, fronthaul capacity allocation for forwarding CSI and data signal to the CU is discussed, and data power control optimization problem to further improve the sum-rate of the system is investigated for CFE and ECF strategies. The main contributions of this paper are summarized as follows,

\begin{itemize}
  \item Considering UL channel training and data transmission of a CF-mMIMO systems wherein APs communicate with CU over LC-FHLs and UEs and APs are subject to HIs.
  \item Minimum-mean-square-error (MMSE) estimations of the channels are obtained at the APs and then the quantized CSI forwarded to the CU over the LC-FHLs, or the received pilot signals at the APs are quantized and forwarded to the CU to apply MMSE estimation.
  \item Based on the above estimation methods and whether the quantized version of the data signal or the quantized version of the weighted data signal be available at the CU, achievable rates are derived for three strategies CFE, EMCF, and ECF. Closed-form expressions for the achievable rate of CFE and EMCF are derived, and upper and lower bounds for the achievable data rate of ECF are obtained which are tight for perfect transceiver hardwares and large enough capacity of FHLs. It is also proved that at high SNR regime and large enough fronthaul capacity, the CFE strategy can result in lower channel estimation error than that of ECF.
  \item Fronthaul capacity allocation for CSI and data signal of each user is discussed. We propose low-complexity fronthaul capacity allocation solutions for the ECF and EMCF based on the \emph{water-filling} notion, since the original optimization problems are analytically untractable.
  \item It is proved that for a single-user CF-mMIMO scenario at high SNR regime, the ECF strategy outperforms the CFE one.
  \item Data power optimization problem to maximize the sum rates of CFE and ECF strategies are studied, and the original optimization problems are approximated with geometric programming.
  \item It is shown that for some special cases, e.g. perfect HI and or infinite-capacity FHLs, our results include that of previously obtained in the literature.
  \item Finally, numerical results are provided to illustrate and compare the performance (e.g. sum SE, EE, channel estimation quality) of the different strategies under various parameters (e.g. capacity of FHL, hardware quality, number of UEs and APs).
\end{itemize}

\emph{Organization}: In Section \ref{sec2sysmodel}, we introduce the system model for CF-mMIMO with transceiver HI and LC-FHLs. UL achievable data rates for different strategies, i.e. CFE, ECF, and EMCF, with transceiver HIs and LC-FHLs are derived in Section \ref{sec3performance}. Section \ref{sec4power} deals with the power allocation problem and Section \ref{sec5numerical} presents numerical results. Finally, the paper is concluded in Section \ref{sec6conclude}.

\emph{Notation}: $\boldsymbol{x}\in \mathbb{C}^{N\times 1}$ denotes a vector in a $N$ dimensional complex space. $\delta(n)$ indicates the Dirac delta function which is 1 for $n=0$, otherwise it is zero. The statistical expected value of a random variable and the conditional one are denoted by $\mathbb{E}\{x\}$, and $\mathbb{E}\{x|y\}$, respectively. The standard information theoretic measures such as differential entropy, mutual information, and the conditional one are respectively denoted by $h(x)$, $I(x;y)$, and $I(x;y|z)$.

\section{System Model}\label{sec2sysmodel}
We consider uplink data transmission of a CF-mMIMO system where $M$ single-antenna APs are distributed in a given area and simultaneously serve $K$ single-antenna UEs, as depicted in Fig.\ref{fig:sysmodel}. AP\textsubscript{$m$}, where $m\!\in\! \{1, 2, ..., M\}$, is connected to a CU via a FHL with limited capacity $C_{m}$ [bits/s/Hz]. Moreover, it is assumed that the wireless channels between APs and UEs are modeled as block fading ones with coherence time $T_C$ [s] and coherence bandwidth of $B_{C}$ [Hz]. Thus, the coherence interval in denoted by $T \!=\! \lfloor T_{C}B_{C}\rfloor$ [samples], and the channel between each AP-UE pair over a coherence interval is modeled by
\begin{equation}\label{eqn:Channel}
      g_{mk} =\sqrt{\beta_{mk}}h_{mk}‎, ‎\text{~for~} m=1,...,M, \text{~and ~} k =1,...,K,
\end{equation}
where, $\beta_{mk}$ denotes the large-scale fading, and $\{h_{mk}\}_{m,k}$ represent the small-scale fading coefficients which are modeled by independent and identically distributed (i.i.d) zero-mean and unit variance complex Gaussian random variables, that is $h_{mk}\sim \mathcal{CN}(0, 1)$.

Before we dive into performance analysis of uplink transmission (channel training and data transmission) in presence of HIs and FHLs with limited capacities, let us briefly explain technical preliminaries in the following subsections.
\begin{figure}[!t]
\centering
\psfrag{UE1}[][][0.6]{\textcolor[rgb]{0.10,0.5,1.00}{\ \ UE\textsubscript{$1$}}}
\psfrag{UE2}[][][0.6]{\textcolor[rgb]{0.10,0.5,1.00}{\ \ UE\textsubscript{$2$}}}
\psfrag{UE3}[][][0.6]{\textcolor[rgb]{0.10,0.5,1.00}{\ \ UE\textsubscript{$3$}}}
\psfrag{UEk}[][][0.6]{\textcolor[rgb]{0.10,0.5,1.00}{\ \ UE\textsubscript{$k$}}}
\psfrag{UEK}[][][0.6]{\textcolor[rgb]{0.10,0.5,1.00}{\ \ \ UE\textsubscript{$K$}}}
\psfrag{AP1}[][][0.6]{\textcolor[rgb]{0.1016,0.6523,0.1016}{AP\textsubscript{$1$}}}
\psfrag{AP2}[][][0.6]{\textcolor[rgb]{0.1016,0.6523,0.1016}{AP\textsubscript{$2$}}}
\psfrag{AP3}[][][0.6]{\textcolor[rgb]{0.1016,0.6523,0.1016}{\ AP\textsubscript{$3$}}}
\psfrag{AP4}[][][0.6]{\textcolor[rgb]{0.1016,0.6523,0.1016}{\ AP\textsubscript{$4$}}}
\psfrag{APm}[][][0.6]{\textcolor[rgb]{0.1016,0.6523,0.1016}{AP\textsubscript{$m$}}}
\psfrag{APM}[][][0.6]{\textcolor[rgb]{0.1016,0.6523,0.1016}{AP\textsubscript{$M$}}}
\psfrag{C1}[][][0.6]{\textcolor[rgb]{0.9961,0.3711,0}{C\textsubscript{$1$}}}
\psfrag{C2}[][][0.6]{\textcolor[rgb]{0.9961,0.3711,0}{C\textsubscript{$2$}}}
\psfrag{C3}[][][0.6]{\textcolor[rgb]{0.9961,0.3711,0}{C\textsubscript{$3$}}}
\psfrag{C4}[][][0.6]{\textcolor[rgb]{0.9961,0.3711,0}{C\textsubscript{$4$}}}
\psfrag{Cm}[][][0.6]{\textcolor[rgb]{0.9961,0.3711,0}{C\textsubscript{$m$}}}
\psfrag{CM}[][][0.6]{\textcolor[rgb]{0.9961,0.3711,0}{\ C\textsubscript{$M$}}}
\psfrag{CU}[][][0.6]{\textcolor[rgb]{0.10,0.10,0.10}{CU}}
\psfrag{Datarecovery}[][][0.6]{\textcolor[rgb]{0.10,0.10,0.10}{\ (Data recovery)}}
\psfrag{User equipment}[][][0.6]{\textcolor[rgb]{0.10,0.5,1.00}{\ \ \ \ \ \ \ \ \ \ \ \ User equipment (UE)}}
\psfrag{Access point}[][][0.6]{\textcolor[rgb]{0.1016,0.6523,0.1016}{\ \ \ \ \ \ \ \ \ \ \ Access point (AP)}}
\psfrag{Backhaul link}[][][0.6]{\textcolor[rgb]{0.9961,0.3711,0}{\ \ \ \ fronthaul link}}
\psfrag{Central unit}[][][0.6]{\textcolor[rgb]{0.10,0.10,0.10}{\ \ \ \ \ \ \ \ \ Central unit (CU)}}
\includegraphics[scale=.3]{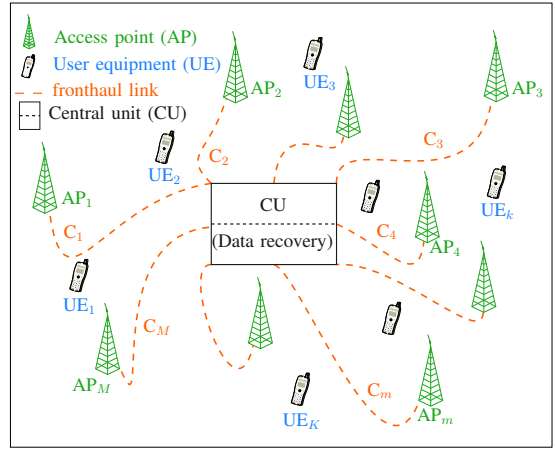}
\caption{{Cell-free mMIMO with limited capacity fronthaul links.}}\label{fig:sysmodel}
\end{figure}

 \subsection{Hardware Impairment Model}
To model the effects of non-ideal hardware at the transmitter/receiver side, it is assumed that the transmitted/received signal becomes distorted with an additive Gaussian noise \cite{bjornson2017massive}. As depicted in Fig.~\ref{fig:HI}, the distorted signal is modeled by
\begin{equation*}
    x_i= \sqrt{\xi_i} x+z_i,
\end{equation*}
where, $x$ is the input signal to the non-ideal hardware, $\xi_{i}\in [{0,1}], \ i = \{t,r\}$ represent the hardware quality factors, where $t$ and $r$ indicate transmitter and receiver, and the distortion caused by the non-ideal hardware is modeled by $z_{i}\!\sim\! \mathcal{CN}\Big(0,(1-\xi_{i})\mathbb{E}\{{\lvert{x}\rvert}^2\}\Big)$ which is independent of $x$. It is worth noting that for the non-ideal hardware model, the input and output signals have the same variance, i.e. $\mathbb{E}\left\lbrace {\lvert x \rvert}^2\right\rbrace = \mathbb{E}\left\lbrace {\lvert x_{i} \rvert}^2\right\rbrace$. Throughout the paper, the terms \emph{perfect} and \emph{useless} hardware respectively indicate $\xi_i=1$ and $\xi_i=0$, and for the sake of simplicity, we assume that all the UEs have the same hardware quality factor $\xi_t$ and all the APs have $\xi_r$.

\subsection{Pilot Transmission}
For the channel training, assume that $\tau$-length orthogonal pilots, i.e. $\boldsymbol{\varphi}_{k}\!\in\!\mathbb{C}^{\tau\times 1}$, are assigned to UEs, where $\tau \!=\! K\! \leq \!T$, and  $\!\boldsymbol{\varphi}_{k}^{H}\boldsymbol{\varphi}_{k^{\prime}}\! =\! \delta(k\!-\!k^{\prime})$ for $k~,~k^{\prime}\in \{1,2,...,K\}$. Therefore, UE\textsubscript{$k$} transmits the pilot signal $\sqrt{\tau\rho_{p}}\boldsymbol{\varphi}_{k}$ and AP\textsubscript{$m$} receives the following signal,
\begin{equation}\label{eqn:Rxpilot}
  \boldsymbol{y}_{p,m} = \sqrt{\xi_{r}}\sum\limits_{k = 1}^{K}g_{mk}\left(\sqrt{\tau\rho_{p}\xi_{t}}\boldsymbol{\varphi}_{k} + \boldsymbol{z}_{t,k}\right) + \boldsymbol{z}_{r,m} + \boldsymbol{n}_{m},
\end{equation}
where, $\rho_{p}$ denotes the pilot power, $\boldsymbol{z}_{t,k}\sim\mathcal{CN}\left(0, \rho_{p}(1-\xi_{t})\boldsymbol{I}_{\tau}\right)$ models the distortion caused by HI at the UE, $\boldsymbol{z}_{r,m}|\{g_{mk}\}\sim\mathcal{CN}\left(0, \rho_{p}(1-\xi_{r})\sum\limits_{k = 1}^{K}{\lvert g_{mk}\rvert}^{2}\boldsymbol{I}_{\tau}\right)$ indicates the HI distortion at the AP\textsubscript{$m$}, and $\boldsymbol{n}_{m}\sim\mathcal{CN}\left(0, N\boldsymbol{I}_{\tau}\right)$ is the additive noise. It is worth mentioning that, $\boldsymbol{n}_{m}$, $\boldsymbol{z}_{t,k}$ and $\boldsymbol{z}_{r,m}$ are assumed to be independent \cite{bjornson2017massive, zhang2017spectral}.
\begin{figure}[!t]
\centering
\psfrag{x}[][][0.75]{$x$}
\psfrag{xtr}[][][0.75]{$\ \ \ x_{t/\color{red}{r}}\! =\! \sqrt{\xi_{t/\color{red}{r}}}x+z_{t/\color{red}{r}}$}
\psfrag{xitr}[][][0.7]{$\sqrt{\xi_{t/\color{red}{r}}}$}
\psfrag{ztr}[][][0.7]{$z_{t/\color{red}{r}}\!\sim\!\mathcal{CN}\Big(\!0,(1-\xi_{t/\color{red}{r}})\mathbb{E}\{{\lvert{x}\rvert}^2\}\!\Big)$}
\includegraphics[scale=.37]{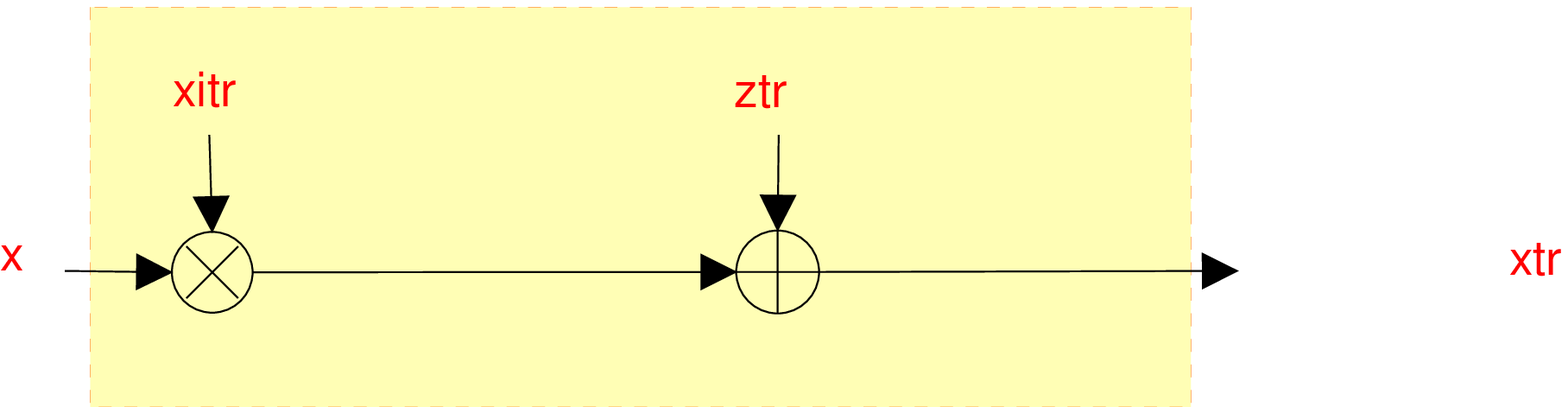}
\caption{{Non-ideal hardware model at the transmitter/receiver.}}\label{fig:HI}
\end{figure}

\subsection{Uplink Data Transmission}
For the data transmission, each UE\textsubscript{$k$} randomly selects information symbol $s_{k} \sim \mathcal{CN}\left(0, 1\right)$ to transmit. Thus, AP\textsubscript{$m$} receives the following superimposed signal,
\begin{equation}\label{eqn:Rxdata}
  {y}_{m} = \sqrt{\xi_{r}}\sum\limits_{k = 1}^{K}g_{mk}\left(\sqrt{\eta_{k}\rho_{u}\xi_{t}}s_{k} + {w}_{t,k}\right) + w_{r,m} + {n}_{m}.
\end{equation}
where, $\rho_{u}$ is the maximum transmit power of each UE and $\eta_{k}\in[0,1]$ is the power control parameter for UE\textsubscript{$k$} . Here, $n_{m}\sim\mathcal{CN}\left(0,N\right)$ denotes the channel additive noise, while ${w}_{t,k}\sim\mathcal{CN}\left(0, \rho_{u}\eta_{k}(1-\xi_{r})\right)$ and ${w}_{r,m}|\{g_{mk}\}\sim\mathcal{CN}\left(0, \rho_{u}(1-\xi_{r})\sum\limits_{k = 1}^{K}\eta_{k}{\lvert g_{mk}\rvert}^{2}\right)$ model the distortion caused by HIs at the UE and AP, respectively.

\subsection{Rate-Distortion Theory}
 To perfectly describe an arbitrary real number, one needs infinite number of bits. Therefore, representing a continuous random variable with \emph{finite} number of bits causes distortion which is precisely analyzed in rate-distortion theory \cite{cover2012elements}. Now the problem is to transmit a random source $x\sim \mathcal{N}(0,P)$ over an error-free fronthaul link with limited dedicated capacity $C$ [bits/s/Hz] which is called the \emph{test channel} as depicted in Fig.~\ref{fig:TestChannel}.  From rate-distortion theory, we need to quantize $x$ to $\hat{x}$ such that the MSE distortion is minimized subject to be able to perfectly transmit the quatized version over the limited capacity link. That is, we have the following optimization problem
\begin{equation}\label{eqn:RateDist}
  R(Q^*)=\underset{Q:~\mathbb{E}\left\lbrace{\lvert \hat{x}-x \rvert}^2\right\rbrace\leq Q}{\min}I(\hat{x};x) \leq C,
\end{equation}
 where $Q$ indicates the distortion caused by quantization and $R(Q^*)$ is the rate-distortion function. Let $\hat{x}=x+q$, where $q\sim \mathcal{N}(0,Q)$ indicates the quantization noise and is independent of $x$, and plug $\hat{x}$ into (\ref{eqn:RateDist}), the minimum achievable value of distortion becomes $Q^*=\frac{P}{2^{2C}-1}$ \cite{cover2012elements}.

\section{Performance Analysis}\label{sec3performance}
After receiving the signals containing pilots, or data signals at the APs, i.e. (\ref{eqn:Rxpilot}) and (\ref{eqn:Rxdata}), each AP can employ the following three strategies for CSI and data signal transmission to the CU;
\begin{itemize}
    \item Compress-forward-estimate (CFE): Each AP compresses the received pilot and data signals separately and forward the compressed versions over the LC-FHL to the CU. Then, channel estimation and data recoveries are carried out at the CU.
    \item Estimate-compress-forward (ECF): First, channel estimation is performed at each AP, and then AP separately compresses the estimated channels and data signals, and forwards the compressed signals to the CU. Finally, the CU recovers CSI and performs data detection.
    \item Estimate-multiply-compress-forward (EMCF): Each AP first estimates the channels, then multiplies the received data signal by the conjugate of the estimated channels, and compresses and forwards the results to the CU. Thus, CU only performs data detection.
\end{itemize}
Since CFE strategy has the lowest complexity at the AP, it is well suited from the APs points of view to have low processing units. In contrast, EMCF one has the highest complexity due to the channel estimation and multiplication at the APs. On the other hand, complexity of ECF strategy is in between the two other methods. In the following subsections, achievable sum-rates of these strategies are derived and discussed.

\begin{figure}[!t]
\centering
\psfrag{xrt}[][][0.85]{$\ \ \ \ \ \ \ x$}
\psfrag{Q}[][][0.8]{$Q(.)$}
\psfrag{yq}[][][0.85]{$ \ \ \ \ \ \ \ \hat{x}\!=\!x\!+\!q\!$}
\psfrag{fronthaullink}[][][0.85]{\ \ \small Error-free LC-FHL}
\psfrag{C}[][][0.8]{$C$ [bits/s/Hz]}
\psfrag{xhat}[][][0.85]{$\hat{x}$}
\includegraphics[scale=.37]{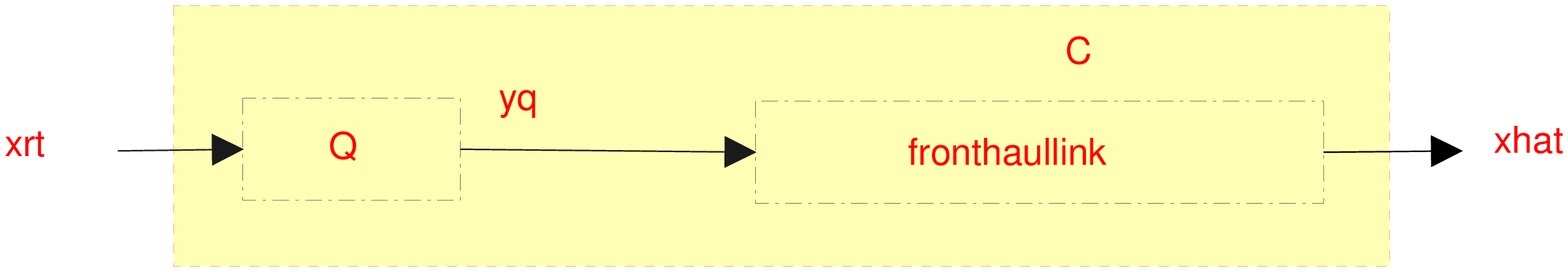}
\caption{{Rate-distortion theoretic test channel.}}\label{fig:TestChannel}
\vspace{-0.5 cm}
\end{figure}

\subsection{Compress-Forward-Estimate Strategy}
In the following, we analyze the CFE strategy in presence of HIs and LC-FHL. First, channel estimation is presented and afterwards the achievable rate is derived.
\subsubsection{Channel Estimation}
After receiving the pilots signal $\boldsymbol{{y}}_{p,m}$ given in (\ref{eqn:Rxpilot}), AP\textsubscript{$m$} compresses this vector element wise. Therefore, we have
\begin{equation}\label{eqn:QpilotCFE}
  \boldsymbol{\hat{y}}_{p,m} = \boldsymbol{{y}}_{p,m} + \boldsymbol{{q}}_{p,m},
\end{equation}
where, $\boldsymbol{q}_{p,m}$ is the quantization noise vector where its elements are i.i.d according to $\mathcal{CN}\left(0,Q_{p,mk}\right)$. Then, the AP forwards $\boldsymbol{\hat{y}}_{p,m}$ to the CU to estimate the channel via applying linear MMSE (LMMSE) estimator. So, we have
\begin{equation}\label{eqn:est_CFE1}
  \tilde{g}_{mk} = \frac{\mathbb{E}\{\bar{y}_{p,mk}g_{mk}^{*}\}}{\mathbb{E}\{{\lvert \bar{y}_{p,mk}\rvert}^2\}}\bar{y}_{p,mk}=:\lambda_{mk}\bar{y}_{p,mk},
\end{equation}where
\begin{align}\label{eqn:CFEMMSE}
  \bar{y}_{p,mk}\!\! &=\!\! \boldsymbol{\varphi}_{k}^{H}\boldsymbol{\hat{y}}_{p,m}\! =\!  \sqrt{\xi_{r}\xi_{t}\tau\rho_{p}}g_{mk} + \boldsymbol{\varphi}_{k}^{H}\sum\limits_{k^{\prime}=1}^{K}\sqrt{\xi_{r}}g_{mk^{\prime}}\boldsymbol{z}_{t,k^{\prime}} + \boldsymbol{\varphi}_{k}^{H}\boldsymbol{z}_{r,m} + \boldsymbol{\varphi}_{k}^{H}\boldsymbol{n}_{m} + \boldsymbol{\varphi}_{k}^{H}\boldsymbol{q}_{p,m},\\\label{eqn:CFElambda}
  \lambda_{mk}\!\! &=\!\! \frac{\sqrt{\xi_{r}\xi_{t}\tau\rho_{p}}\beta_{mk}}{\xi_{r}\xi_{t}\tau\rho_{p}\beta_{mk} + \rho_{p}(1-\xi_{r}\xi_{t})\sum\limits_{k^{\prime}=1}^{K}\beta_{mk^{\prime}} + N + \frac{1}{\tau}\sum\limits_{k^{\prime}=1}^{K}Q_{p,mk^{\prime}}}.
\end{align}
Variance of the channel estimation given in (\ref{eqn:est_CFE1}) is $\gamma_{mk}:=\mathbb{E}\{{\lvert \tilde{g}_{mk} \rvert}^{2}\}=\sqrt{\xi_{r}\xi_{t}\tau\rho_{p}}\beta_{mk}\lambda_{mk}$. It is worth noting that, due to the product of Gaussian random variables in $\bar{y}_{p,mk}$, it does not have Gaussian distribution. As a result, LMMSE is not an optimal estimator but it still results in a good approximation \cite{bjornson2017massive} and estimation error and estimated variable are only uncorrelated.
\subsubsection{Uplink Achievable Rate}
After receiving the data signal $y_m$ given in (\ref{eqn:Rxdata}), AP\textsubscript{$m$} compresses this signal as follows
\begin{equation}\label{eqn:QdataCFE}
  \hat{y}_{m} = {y}_{m} + {q}_{d,m},
\end{equation}
where $q_{d,m}\sim\mathcal{CN}\left(0,Q_{d,m}\right)$ is the quantization noise. Then, the AP forwards $\hat{y}_{m}$ to the CU over the LC-FHL.

After receiving all the quantized symbols, to recover each user's data, the CU performs MRC technique based on the estimated channels given in (\ref{eqn:est_CFE1}). Therefore, to recover data of UE\textsubscript{$k$} by utilizing the well-known MRC along with  use-and-then-Forget (UatF) approach \cite{marzetta2016fundamentals}, the effective received symbol for UE\textsubscript{k} becomes
\begin{equation*}
  r_{k} = \sum\limits_{m=1}^{M}\hat{y}_{m}\tilde{g}_{mk}^{*}\! = \underbrace{\mathbb{E}\!\left\lbrace\!\sum\limits_{m=1}^{M}\sqrt{\rho_{u}\eta_{k}\xi_{r}\xi_{t}}g_{mk}\tilde{g}_{mk}^{*}\!\!\right\rbrace}_{\text{DS}_k}\!s_{k} \!+\!\! \underbrace{\sqrt{\rho_{u}\eta_{k}\xi_{r}\xi_{t}}\!\!\left\lbrace\!\sum\limits_{m=1}^{M}g_{mk}\tilde{g}_{mk}^{*} \!-\! \mathbb{E}\!\left\lbrace\!\sum\limits_{m=1}^{M}g_{mk}\tilde{g}_{mk}^{*}\!\!\right\rbrace\!\!\right\rbrace}_{\text{BU}_k}\!s_{k}\\
\end{equation*}
\begin{equation}\label{eqn:MRC_CFE}
  +\!\! \sum\limits_{k^{\prime}\neq k}^{K}\underbrace{\!\!\sqrt{\rho_{u}\eta_{k^{\prime}}\xi_{r}\xi_{t}}\!\sum\limits_{m=1}^{M}g_{mk^{\prime}}\tilde{g}_{mk}^{*}}_{\text{IUI}_{kk^{\prime}}}\!s_{k^{\prime}} +\!\!\sum\limits_{k^{\prime}= 1}^{K}\!\underbrace{\!\sqrt{\xi_{r}}\!\sum\limits_{m=1}^{M}\!\!g_{mk^{\prime}}\tilde{g}_{mk}^{*}w_{t,k^{\prime}}}_{\text{THI}_{kk^{\prime}}}\! + \underbrace{\!\!\sum\limits_{m=1}^{M}\!\!w_{r,m}\tilde{g}_{mk}^{*}}_{\text{RHI}_k}\! + \underbrace{\!\!\sum\limits_{m=1}^{M}\!\!n_{m}\tilde{g}_{mk}^{*}}_{\text{RN}_k}\!+ \underbrace{\!\!\sum\limits_{m=1}^{M}\!\!q_{d,m}\tilde{g}_{mk}^{*}}_{\text{QN}_k}\!,
\end{equation}
where,
\begin{itemize}
  \item DS\textsubscript{$k$} indicates the desired part of the signal,
  \item BU\textsubscript{$k$} is the beamforming uncertainty due to statistical knowledge of the channels only,
  \item IUI\textsubscript{$kk^{\prime}$} denotes the inter-user interference between UE\textsubscript{$k$} and UE\textsubscript{$k^{\prime}$},
  \item THI\textsubscript{$kk^{\prime}$} represents interference as a result of HI at the transmitters, i.e. UEs,
  \item RHI\textsubscript{$k$} is interference due to HI at the receivers, i.e. APs,
  \item RN\textsubscript{$k$} and QN\textsubscript{$k$} are receiver and compression noises, respectively.
\end{itemize}
These terms are pair-wisely uncorrelated, and assuming them as an equivalent Gaussian random variables, to analyze the worst case scenario, the following uplink achievable data rate for UE\textsubscript{$k$} is obtained.
\begin{equation}\label{GeneralRate}
\begin{aligned}
    \!R_{k} \!\!&= \!\!\frac{T\!-\!\tau}{T}\!\log_{2}\!\!\left(\!1 \!+ \!\text{SINR}_{k}^{CFE}\right)\!\!,\\
    \!\text{SINR}_{k}^{CFE} \!\!&= \!\! \frac{{\!\lvert\text{DS}_{k}\rvert}^2\!}{\mathbb{E}\!\!\left\lbrace{\!\lvert\text{BU}_{k}\rvert}^2\!\right\rbrace \!+\!\!\!\sum\limits_{k^{\prime}\neq k}^{K}\!\!\mathbb{E}\!\!\left\lbrace{\!\lvert\text{IUI}_{kk^{\prime}}\rvert}^2\!\right\rbrace \!+\! \!\!\sum\limits_{k^{\prime} = 1}^{K}\!\!\mathbb{E}\!\!\left\lbrace{\!\lvert\text{THI}_{kk^{\prime}}\rvert}^2\!\right\rbrace \!+\! \mathbb{E}\!\!\left\lbrace{\!\lvert\text{RHI}_{k}\rvert}^2\!\right\rbrace \!+\! \mathbb{E}\!\!\left\lbrace{\!\lvert\text{RN}_{k}\rvert}^2\!\right\rbrace \!+\! \mathbb{E}\!\!\left\lbrace{\!\lvert\text{QN}_{k}\rvert}^2\!\right\rbrace}.
\end{aligned}
\end{equation}
\begin{theorem}
  For the CFE strategy, UE\textsubscript{$k$} has the following signal to noise and interference ratio
\begin{equation}\label{eqn:RateCFE1}
  \text{SINR}_{k}^{CFE}\!= \!\!\frac{\!\rho_{u}\eta_{k}\xi_{r}\xi_{t}\Gamma_{kk} \!}{\!\!\sum\limits_{k^{\prime}=1}^{K}\!\!\!\rho_{u}\eta_{k^{\prime}}\!\!\Bigg[\!\Omega_{kk^{\prime}}\!\!+\!\! \xi_{r}(\!1\!\!-\!\xi_{t}\!)\!\!\left(\!\!\boldsymbol{\varphi}_{k}^{H}\!\boldsymbol{\varphi}_{k^{\prime}}\!\!-\!\frac{1}{\!\tau\xi_{t}\!}\!\right)\!\Gamma_{kk^{\prime}} \!\! + \!\!\rho_{p}(\!1\!\!-\!\xi_{r}\!)\!\Big(\!\tau\xi_{r}\xi_{t}\boldsymbol{\varphi}_{k}^{H}\!\boldsymbol{\varphi}_{k^{\prime}}\!\!-\!(\!1\!+\!\xi_{r}\!\!-\!\xi_{r}\xi_{t}\!)\!\Big)\!\Lambda_{kk^{\prime}}\!\!\Bigg]\!\!+\!\!\mathcal{E}_{k}},
\end{equation}
\begin{equation*}
 \Gamma_{kk^{\prime}} \!=\!\!\! \left(\sum\limits_{m=1}^{M}\!\!\gamma_{mk}\frac{\beta_{mk^{\prime}}}{\beta_{mk}}\!\right)^2, \ \ \ \ \ \Omega_{kk^{\prime}} \!= \!\!\!\sum\limits_{m=1}^{M}\!\!\gamma_{mk}\beta_{mk^{\prime}}, \ \ \ \ \ \Lambda_{kk^{\prime}} \!=\! \!\sum\limits_{m=1}^{M}\!\!\lambda_{mk}^{2}\beta_{mk^{\prime}}^{2} \ \ \ \ \ \mathcal{E}_{k}\!=\!\!\!\sum\limits_{m=1}^{M}\!\!(\!N\!\!+\!Q_{d,m})\!\gamma_{mk}.
\end{equation*}
\end{theorem}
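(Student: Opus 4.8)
The plan is to compute the second moment of every term in the decomposition (\ref{eqn:MRC_CFE}) and substitute it into the worst-case SINR (\ref{GeneralRate}). I would first justify that $\text{DS}_k,\text{BU}_k,\text{IUI}_{kk'},\text{THI}_{kk'},\text{RHI}_k,\text{RN}_k,\text{QN}_k$ are pairwise uncorrelated, which is what legitimizes adding their powers in the denominator; this follows from the mutual independence of the pilot- and data-phase distortions and additive noises, the conditional (given the channels) zero-mean property of the receiver distortions $\boldsymbol{z}_{r,m},w_{r,m}$, and the LMMSE orthogonality noted after (\ref{eqn:CFElambda}). The numerator is then immediate: writing $\tilde g_{mk}=\lambda_{mk}\bar y_{p,mk}$ and noting that every summand of $\bar y_{p,mk}$ in (\ref{eqn:CFEMMSE}) other than $\sqrt{\xi_r\xi_t\tau\rho_p}\,g_{mk}$ is uncorrelated with $g_{mk}$, I get $\mathbb{E}\{g_{mk}\tilde g_{mk}^{*}\}=\gamma_{mk}$, hence ${\lvert\text{DS}_k\rvert}^2=\rho_u\eta_k\xi_r\xi_t\big(\sum_m\gamma_{mk}\big)^2=\rho_u\eta_k\xi_r\xi_t\,\Gamma_{kk}$.

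For the denominator the noise-like terms are direct: since the data noise $n_m$ and quantization $q_{d,m}$ are independent of the pilot-based estimate and $\mathbb{E}\{{\lvert\tilde g_{mk}\rvert}^2\}=\gamma_{mk}$, I obtain $\mathbb{E}\{{\lvert\text{RN}_k\rvert}^2\}+\mathbb{E}\{{\lvert\text{QN}_k\rvert}^2\}=\sum_m(N+Q_{d,m})\gamma_{mk}=\mathcal{E}_k$. Every remaining term reduces to the single master quantity $\mathbb{E}\{\lvert\sum_m g_{mk'}\tilde g_{mk}^{*}\rvert^2\}$ (with its self-moment variant feeding $\text{BU}_k$ when $k'=k$). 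The crucial point I must not overlook is that the transmitter distortions $\boldsymbol{z}_{t,k}$ (pilot) and $w_{t,k}$ (data) originate at the UE and are therefore \emph{common to all APs}, whereas the receiver distortions and noises are per-AP; consequently the AP summands of $\sum_m g_{mk'}\tilde g_{mk}^{*}$ are \emph{not} independent across $m$, and the correlation they share through $\boldsymbol{z}_{t,k'}$ survives the coherent MRC combining and is exactly what generates the squared-sum terms $\Gamma_{kk'}$.

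To evaluate the master moment I would split it into diagonal ($m=m'$) and off-diagonal ($m\neq m'$) parts. For the diagonal part I condition on the channels to linearize the channel-dependent receiver-HI variance $\propto\sum_{k''}{\lvert g_{mk''}\rvert}^2$, then average using $\mathbb{E}\{{\lvert h_{mk}\rvert}^4\}=2$; with $\lambda_{mk}^2\,\text{den}_{mk}=\gamma_{mk}$ (where $\text{den}_{mk}$ is the denominator of (\ref{eqn:CFElambda})) this collapses per AP to $\gamma_{mk}\beta_{mk'}+\rho_p(1-\xi_r\xi_t)\lambda_{mk}^2\beta_{mk'}^2$, i.e.\ the $\Omega_{kk'}$ piece together with a $\Lambda_{kk'}$ piece. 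For the off-diagonal part only the common factor $\boldsymbol{z}_{t,k'}$ pairs AP $m$ with AP $m'$, producing $\xi_r\rho_p(1-\xi_t)\lambda_{mk}\beta_{mk'}\lambda_{m'k}\beta_{m'k'}$, which sums to a $\tfrac{1-\xi_t}{\tau\xi_t}\Gamma_{kk'}$ contribution once the double-counted diagonal is subtracted; for $k'=k$ there is an extra mean-product cross term $\gamma_{mk}\gamma_{m'k}$ that I would cancel against the mean subtracted inside $\text{BU}_k$.

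Finally I would assemble the denominator by weighting this master moment with the correct prefactor of each physical term — $\rho_u\eta_{k'}\xi_r\xi_t$ for $\text{IUI}_{kk'}$, $\xi_r\,\mathrm{Var}(w_{t,k'})=\xi_r\rho_u\eta_{k'}(1-\xi_t)$ for $\text{THI}_{kk'}$, and $\rho_u(1-\xi_r)\eta_{k'}$ for $\text{RHI}_k$ via the conditional variance of $w_{r,m}$ — and then regroup. The identities $\xi_r\xi_t+\xi_r(1-\xi_t)+(1-\xi_r)=1$ and $\rho_p(1-\xi_r\xi_t)=\xi_r\rho_p(1-\xi_t)+\rho_p(1-\xi_r)$ are precisely what collapse the separate HI-weighted pieces into the unit-coefficient $\Omega_{kk'}$ and the stated $\Gamma$- and $\Lambda$-groupings of (\ref{eqn:RateCFE1}), with the pilot inner products $\boldsymbol{\varphi}_k^{H}\boldsymbol{\varphi}_{k'}$ kept symbolic. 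I expect the main obstacle to be exactly the across-AP bookkeeping of the steps above: correctly detecting that the transmitter distortions couple the APs (so that $\Gamma_{kk'}$ appears for $k'\neq k$ as well, with coefficient $-1/(\tau\xi_t)$), and carrying the channel-dependent receiver-HI variance and the non-Gaussian product terms through the fourth-order averages without conflating the coherent ($\Gamma$) and incoherent ($\Omega$) contributions.
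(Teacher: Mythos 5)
Your proposal takes exactly the route of the paper's own proof (Appendix A): the same UatF term-wise decomposition plugged into (\ref{GeneralRate}), with each second moment evaluated by conditioning on the channels, fourth-moment averaging with $\mathbb{E}\{|h_{mk}|^4\}=2$, and careful tracking of the cross-AP coherence induced by the common UE-side distortions $\boldsymbol{z}_{t,k'}$ — indeed your intermediate quantities (the per-AP diagonal $\gamma_{mk}\beta_{mk'}+\rho_p(1-\xi_r\xi_t)\lambda_{mk}^2\beta_{mk'}^2$, the coherent $\frac{1-\xi_t}{\tau\xi_t}\Gamma_{kk'}$ square, and the prefactors $\rho_u\eta_{k'}\xi_r\xi_t$, $\xi_r\rho_u\eta_{k'}(1-\xi_t)$, $\rho_u\eta_{k'}(1-\xi_r)$) match the appendix verbatim. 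One remark: your own bookkeeping yields $+\frac{1}{\tau\xi_t}$ and $+(1+\xi_r-\xi_r\xi_t)$ in the regrouped coefficients, consistent with Appendix A, while the theorem as printed carries minus signs in those two places — an internal sign inconsistency of the paper itself, not a gap in your derivation.
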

\begin{proof}
  The proof is given in Appendix A.
\end{proof}
\begin{Remark}
  For infinite-capacity FHLs, i.e. $C_m=\infty$ or equivalently $Q_{d,m} = Q_{p,mk} = 0$ for all $m$ and $k$, equation (\ref{eqn:RateCFE1}) reduces to \cite[(17)  with orthogonal pilots]{zhang2018performance}. Also, if hardwares are perfect, the results reduces to \cite[equation (27) with orthogonal pilots]{ngo2017cell}.
 \end{Remark}

Now, based on the rate-distortion theory, we derive the variance of quatization noises given in (\ref{eqn:QdataCFE}) and (\ref{eqn:QpilotCFE}) as functions of $C_m$. Let $C_{m}=C_{p,m}+C_{d,m}$, where $C_{p,m}$ denotes part of the fronthaul capacity dedicated for forwarding quantized pilot vector $\boldsymbol{\hat{y}}_{p,m}$ and $C_{d,m}$ is the dedicated capacity for forwarding $y_m$.
\begin{itemize}
    \item For computing $C_{p,m}$, we have
    \begin{equation}\label{PBHCFE}
    \begin{split}
    \hspace{-0.8cm} C_{p,m} &= \frac{1}{T}I(\boldsymbol{y}_{p,m};\boldsymbol{\hat{y}}_{p,m}) = \frac{1}{T}\Big[h(\boldsymbol{\hat{y}}_{p,m})-h(\boldsymbol{\hat{y}}_{p,m}|\boldsymbol{y}_{p,m})\Big] = \frac{1}{T}\Big[h(\boldsymbol{y}_{p,m}+\boldsymbol{q}_{p,m})-h(\boldsymbol{q}_{p,m})\Big]\\
    &\overset{\text{(a)}}\leq \frac{1}{T}\Big[\!\log\left((2\pi e)^\tau\lvert{\mathbb{E}\left\lbrace(\boldsymbol{y}_{p,m}\!+\!\boldsymbol{q}_{p,m})(\boldsymbol{y}_{p,m}\!+\!\boldsymbol{q}_{p,m})^H\right\rbrace}\rvert\right) \!-\! \log\left((2\pi e)^\tau\lvert{\mathbb{E}\left\lbrace\boldsymbol{q}_{p,m}\boldsymbol{q}_{p,m}^H\right\rbrace}\rvert\right)\!\Big]\\
    &\overset{\text{(b)}}= \frac{K}{T}\log\Bigg[1+\frac{\rho_{p}\sum\limits_{k=1}^{K}\beta_{mk}+N}{Q_{p,m}}\Bigg], \ \ \
    \end{split}
    \end{equation}
    where, (a) follows from the \emph{maximum differential entropy lemma} in \cite[Chapter 2]{el2011network}. Since elements of the vector $\boldsymbol{y}_{p,m}$ have the same variance, they are quantized in the same manner. Thus, we have $Q_{p,m}=Q_{p,m1}=...=Q_{p,mK}$, and as a result (b) holds for $\tau=K$.
    \item For computing $C_{d,m}$, we have
    \begin{equation}\label{DBHCFE}
    \hspace{-0cm} C_{d,m}\! =\! \frac{\!T\!-\!\tau\!}{T}I({y}_{m};{\hat{y}}_{m}) \!\leq\! \frac{\!T\!-\!\tau\!}{T}\log\!\Bigg[\!\frac{\mathbb{E}\left\lbrace{\lvert y_{m}\rvert}^2\right\rbrace \!+\! Q_{d,m}}{Q_{d,m}}\!\Bigg]\!=\! \frac{\!T\!-\!\tau\!}{T}\log\!\Bigg[\!1\!+\!\frac{\rho_{u}\sum\limits_{k=1}^{K}\eta_{k}\beta_{mk}\!+\!N}{Q_{d,m}}\!\Bigg].
    \end{equation}
\end{itemize}
Therefore, from (\ref{PBHCFE}) and (\ref{DBHCFE}), one can find variances of the quantization noises, i.e. $Q_{p,m}$ and $Q_{d,m}$, such that  $C_{p,m}+C_{d,m} = C_{m}$.

\subsection{Estimate-Compress-Forward Strategy}
Here, we analyze the ECF strategy; First, the LMMSE channel estimation is preformed at the AP, then AP separately quatizes the channel coefficients and the received superimposed data signal, and forward them over the LC-FHL. Finally, lower and upper bonds on the achievable rates are presented.
\subsubsection{Channel Estimation}
Since the AP\textsubscript{$m$} first estimates the $K$ channels corresponding to $K$ UEs, by plugging $Q_{p,mk}=0$ and replacing $\boldsymbol{\hat{y}}_{p,m}$ with $\boldsymbol{{y}}_{p,m}$ in (\ref{eqn:QpilotCFE})--(\ref{eqn:CFElambda}), we have
\begin{equation}\label{eqn:est_ECF1}
  \tilde{g}_{mk} = \lambda_{mk}\bar{y}_{p,mk} = \frac{\mathbb{E}\{\bar{y}_{p,mk}g_{mk}^{*}\}}{\mathbb{E}\{{\lvert \bar{y}_{p,mk}\rvert}^2\}}\bar{y}_{p,mk},
\end{equation}
where
\begin{equation}\label{eqn:est_ECF2}
  \bar{y}_{p,mk}\! =\! \boldsymbol{\varphi}_{k}^{H}\boldsymbol{{y}}_{p,m}, \ \ \ \ \ \ \ \ \ \ \lambda_{mk}\! =\! \frac{\sqrt{\xi_{r}\xi_{t}\tau\rho_{p}}\beta_{mk}}{\xi_{r}\xi_{t}\tau\rho_{p}\beta_{mk} + \rho_{p}(1-\xi_{r}\xi_{t})\sum\limits_{k^{\prime}=1}^{K}\beta_{mk^{\prime}} + N}.
\end{equation}
Thus, the variance of the estimated channel is $\mathbb{E}\!\!\left\lbrace{\!\lvert\tilde{g}_{mk}\rvert}^2\!\right\rbrace\! =\! \gamma_{mk}\! =\!\! \sqrt{\xi_{r}\xi_{t}\tau\rho_{p}}\beta_{mk}\lambda_{mk}$.

Then, AP\textsubscript{$m$} quantizes each estimated channel, and then forward them over the LC-FHL. Thus, CU receives $\hat{g}_{mk}, \ \forall \{m, k\}$ through the following test channel, as follows
\begin{equation}\label{PtestECF}
  \tilde{g}_{mk} = \hat{g}_{mk} + q_{p,mk}, \forall \{m, k\}.
\end{equation}
One can show that $\gamma_{mk}^{\prime}:=\mathbb{E}\left\lbrace{\lvert\hat{g}_{mk}\rvert}^2\right\rbrace = \mathbb{E}\left\lbrace\hat{g}_{mk}^{*}g_{mk}\right\rbrace = \mathbb{E}\left\lbrace{\lvert\tilde{g}_{mk}\rvert}^2\right\rbrace - Q_{p,mk}$.
\subsubsection{Uplink Achievable Rate}
For ECF strategy, quantizing the received data signal and forwarding it to the CU is the same as that of CFE one. The only difference is that the CU uses $\hat{g}_{mk}^{*}$, instead of $\tilde{g}_{mk}^{*}$ in (\ref{eqn:MRC_CFE}), for applying MRC and UatF techniques to derive the effective signal $r_{k}$ for UE\textsubscript{$k$}. Besides, in contrast to the CFE, due to the correlation between the received signals of different APs which is stemmed from the hardware impairments, computing achievable rate becomes challenging. In the following Theorems, we present a lower and a upper bounds for the achievable rates.
\begin{theorem}
Using equation (\ref{GeneralRate}), lower bound of SINR of UE\textsubscript{$k$} is given by
\begin{equation}\label{eqn:RateCFE1}
  \text{SINR}_{k,LB}^{ECF}\!= \!\frac{\!\rho_{u}\eta_{k}\xi_{r}\xi_{t}\left(\sum\limits_{m=1}^{M}\gamma_{mk}^{\prime}\right)^2 \!}{\!\!\sum\limits_{k^{\prime}=1}^{K}\!\!\rho_{u}\eta_{k^{\prime}}\!\!\Bigg[\!\Omega_{kk^{\prime}}^{\prime}\!\!+\!\! \xi_{r}(\!1\!\!-\!\!\xi_{t}\!)\!\left(\!\boldsymbol{\varphi}_{k}^{H}\!\boldsymbol{\varphi}_{k^{\prime}}\!\!-\!\frac{1}{\!\tau\xi_{t}\!}\!\!\right)\!\!\Gamma_{kk^{\prime}} \!\! +\! \!\rho_{p}(\!1\!\!-\!\!\xi_{r}\!)\!\Big(\!\tau\xi_{r}\xi_{t}\boldsymbol{\varphi}_{k}^{H}\!\boldsymbol{\varphi}_{k^{\prime}}\!-\!(\!1\!+\!\xi_{r}\!-\!\xi_{r}\xi_{t}\!)\!\Big)\!\Lambda_{kk^{\prime}}\!\!\Bigg]\!+\!\mathcal{E}_{k}^{\prime}},
\end{equation}
\vspace{-0.7cm}
\begin{equation*}
\begin{split}
\Omega_{kk^{\prime}}^{\prime} =& \sum\limits_{m=1}^{M}\gamma_{mk}^{\prime}\beta_{mk^{\prime}},\ \ \ \ \ \  \mathcal{E}_{k}^{\prime}=\sum\limits_{m=1}^{M}\!\!(\!N\!\!+\!Q_{d,m})\!\gamma_{mk}^{\prime}\!- \!\!\! \sum\limits_{k^{\prime}\neq k}^{K}\!\!\!\rho_{u}\eta_{k^{\prime}}\!\Bigg[\!\Big(\rho_{p}(1-\xi_{r})^{2}\!+\!\xi_{r}\Big)\!\!\sum\limits_{m=1}^{M}\!\!Q_{p,mk}Q_{p,mk^{\prime}}\!\Bigg]\!\\
-&\rho_{u}\eta_{k}\!\Bigg[\!\Big(\xi_{r}(1-\xi_{t})\!+\!\rho_{p}(1-\xi_{r})^2\!\Big)\!\sum\limits_{m=1}^{M}\!\!Q_{p,mk}^{2}\!-\!2\xi_{r}\xi_{t}\Big(\sum\limits_{m=1}^{M}Q_{p,mk}\Big)\Big(\sum\limits_{m=1}^{M}\gamma_{mk}^{\prime}\Big)\Bigg].
\end{split}
\end{equation*}
\end{theorem}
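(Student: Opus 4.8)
The plan is to rerun the CFE computation of Appendix~A with the quantized estimate $\hat g_{mk}$ in place of $\tilde g_{mk}$ in the UatF/MRC combiner \eqref{eqn:MRC_CFE}, and then to bound the one cross–access–point moment that does not close in form. The two facts I would lean on throughout are the test–channel split $\tilde g_{mk}=\hat g_{mk}+q_{p,mk}$ of \eqref{PtestECF}, with $q_{p,mk}\sim\mathcal{CN}(0,Q_{p,mk})$ independent of $\tilde g_{mk}$ and of every quantity carrying an AP index $m'\neq m$, and the identity $\gamma'_{mk}=\mathbb{E}\{\lvert\hat g_{mk}\rvert^2\}=\mathbb{E}\{\hat g_{mk}^{*}g_{mk}\}=\gamma_{mk}-Q_{p,mk}$ recorded after \eqref{PtestECF}.

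First I would recompute the coherent term. Because $\mathbb{E}\{g_{mk}\hat g_{mk}^{*}\}=\gamma'_{mk}$, the desired part of \eqref{GeneralRate} becomes $\lvert\mathrm{DS}_k\rvert^2=\rho_u\eta_k\xi_r\xi_t\big(\sum_{m=1}^{M}\gamma'_{mk}\big)^2$, which is exactly the claimed numerator; the sole change relative to CFE is the replacement $\gamma_{mk}\mapsto\gamma'_{mk}$.

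Then I would work through the denominator term by term, substituting $\hat g_{mk}=\tilde g_{mk}-q_{p,mk}$ into each of $\mathrm{BU}_k$, $\mathrm{IUI}_{kk'}$, $\mathrm{THI}_{kk'}$, $\mathrm{RHI}_k$, $\mathrm{RN}_k$, $\mathrm{QN}_k$ and expanding the second moments. Every piece that depends on $\tilde g$ alone reproduces the corresponding CFE expression, which is why the kernels $\Gamma_{kk'}$, $\Lambda_{kk'}$ and the same hardware/pilot prefactors reappear; every piece odd in $q_{p,mk}$ has zero mean and vanishes; and every piece quadratic in $q_{p,mk}$ survives only on the diagonal $m=m'$, by independence of the quantization noise across APs, producing precisely the $\sum_m Q_{p,mk}^2$, $\sum_m Q_{p,mk}Q_{p,mk'}$ and mixed $\big(\sum_m Q_{p,mk}\big)\big(\sum_m\gamma'_{mk}\big)$ terms gathered in $\mathcal{E}'_k$, together with $\Omega'_{kk'}=\sum_m\gamma'_{mk}\beta_{mk'}$. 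The delicate bookkeeping is the sign of these corrections: since the interference kernels are written with the pre–quantization variance $\gamma_{mk}$ inside $\Gamma_{kk'}$, the quadratic $q_{p,mk}$ contributions must be discounted back to $\gamma'_{mk}$, which is what places them in $\mathcal{E}'_k$ with a minus sign.

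The hard part, and the reason only a bound rather than an exact expression is asserted, is the off–diagonal ($m\neq m'$) correlation between the separately quantized channel estimates and data at distinct APs, induced by the transmit–side impairment (the $z_{t,k'}$ shared by all APs during training, together with its data–phase counterpart). Because $\bar y_{p,mk}$ in \eqref{eqn:est_ECF2} is a product of Gaussians and therefore non–Gaussian, the relevant fourth–order cross moments do not collapse to a closed form as they effectively did in CFE. I would therefore replace the true correlated interference by an over–estimate built from the CFE kernel $\Gamma_{kk'}$, i.e.\ bound this nonnegative cross contribution from above; enlarging the interference–plus–noise denominator in \eqref{GeneralRate} in this controlled way is exactly what turns the identity into the claimed \emph{lower} bound $\text{SINR}_{k,LB}^{ECF}$. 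The over–estimate is driven by $1-\xi_t$ and by the $Q_{p,mk}$'s, so it vanishes and the bound becomes tight precisely when the hardware is perfect and the fronthaul capacity is large, matching the stated tightness; the companion upper bound of the next Theorem follows from bounding the same cross term in the opposite direction.
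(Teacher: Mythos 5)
Your overall architecture does match the paper's Appendix B: substitute $\hat g_{mk}$ for $\tilde g_{mk}$ in (\ref{eqn:MRC_CFE}), keep $\mathrm{DS}_k$, $\mathrm{RN}_k$, $\mathrm{QN}_k$ exact with $\gamma_{mk}^{\prime}$, anchor the interference moments to the CFE (tilde) moments of Appendix A, and convert upper bounds on the interference variances into the SINR lower bound via (\ref{GeneralRate}). However, there are two genuine gaps. First, your two stated ``facts'' are mutually inconsistent: if $q_{p,mk}$ were independent of $\tilde g_{mk}$, then $\tilde g_{mk}=\hat g_{mk}+q_{p,mk}$ would force $\mathbb{E}\{\lvert\hat g_{mk}\rvert^2\}=\gamma_{mk}+Q_{p,mk}$, contradicting the identity $\gamma_{mk}^{\prime}=\gamma_{mk}-Q_{p,mk}$ you also invoke. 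The correct reading of the test channel (\ref{PtestECF}) is the backward one: $q_{p,mk}$ is independent of the \emph{output} $\hat g_{mk}$, but correlated with $\tilde g_{mk}$ and hence with $g_{mk}=\tilde g_{mk}+e_{mk}$.

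Second, and substantively, your claim that ``every piece odd in $q_{p,mk}$ has zero mean and vanishes'' is exactly what fails, for the reason above: $q_{p,mk}$ is not independent of $g_{mk}$, and the joint law is non-Gaussian (the product of Gaussians in $\bar y_{p,mk}$), so moments such as $\mathbb{E}\{\lvert g_{mk}\rvert^{2}\hat g_{mk}^{*}q_{p,mk}\}$ cannot be shown to vanish. The paper's proof hinges on precisely this point: in (\ref{e1}) it \emph{drops} those odd moments under a nonnegativity assumption (step (a)) and lower-bounds $\mathbb{E}\{\lvert g_{mk}q_{p,mk}^{*}\rvert^{2}\}$ by $Q_{p,mk}\beta_{mk}+Q_{p,mk}^{2}$ (step (b)), yielding the one-sided estimate $\mathbb{E}\{\lvert g_{mk}\hat g_{mk}^{*}\rvert^{2}\}\leq \mathbb{E}\{\lvert g_{mk}\tilde g_{mk}^{*}\rvert^{2}\}-Q_{p,mk}\beta_{mk}-Q_{p,mk}^{2}$. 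That is, the inequality already arises in the \emph{same-AP, diagonal} fourth moments, not only in the $m\neq n$ cross-AP term where you localize it; the cross-AP moment is handled separately by (\ref{e2}), giving the $-Q_{p,mk}Q_{p,nk}$ correction, which combined with rewriting $(\sum_m\gamma_{mk})^2$ in terms of $\gamma_{mk}^{\prime}$ produces the $2\xi_r\xi_t\big(\sum_m Q_{p,mk}\big)\big(\sum_m\gamma_{mk}^{\prime}\big)$ term in $\mathcal{E}_k^{\prime}$. Under your bookkeeping (odd terms vanish exactly, diagonal quadratics exact, only one cross-AP moment over-estimated via $\Gamma_{kk'}$), the diagonal contributions would come out as equalities and there would be no source for the specific $\sum_m Q_{p,mk}^{2}$ and $\sum_m Q_{p,mk}Q_{p,mk'}$ corrections in $\mathcal{E}_k^{\prime}$ --- note these last products are same-AP, \emph{cross-user} terms, so independence of quantization noise across APs is not the mechanism generating them --- and the derivation would neither reproduce the stated $\mathcal{E}_k^{\prime}$ nor certify a lower bound.
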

\begin{proof}
  The proof is given in Appendix B.
\end{proof}
\begin{theorem}
Using equation (\ref{GeneralRate}), upper bound of SINR\textsubscript{$k$} is given by
\begin{equation}\label{eqn:RateCFE1}
  \text{SINR}_{k,UB}^{ECF}\!= \!\frac{\!\rho_{u}\eta_{k}\xi_{r}\xi_{t}\left(\sum\limits_{m=1}^{M}\gamma_{mk}^{\prime}\right)^2 \!}{\!\!\sum\limits_{k^{\prime}=1}^{K}\!\!\rho_{u}\eta_{k^{\prime}}\!\!\Bigg[\!\!\sum\limits_{m=1}^{M}\!\!\gamma_{mk}^{\prime}\beta_{mk^{\prime}}\!\Bigg]\!\!+\! \rho_{u}\eta_{k}\!\Bigg[\!\xi_{r}\!(\!1\!-\!\xi_{t})\!\Big(\!\sum\limits_{m=1}^{M}\!\!\!\gamma_{mk}^{\prime}\!\Big)^{2}\!\!+\!(1\!\!-\!\xi_{r})\!\!\!\sum\limits_{m=1}^{M}\!\!\!\gamma_{mk}^{{\prime}^2}\!\Bigg] \!\!+\!\!\sum\limits_{m=1}^{M}\!\!(N\!+\!Q_{d,m}\!)\gamma_{mk}^{\prime}}\!.
\end{equation}
\end{theorem}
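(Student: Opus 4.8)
The plan is to evaluate the exact use-and-then-forget SINR of (\ref{GeneralRate}) for the effective signal obtained from (\ref{eqn:MRC_CFE}) after replacing every $\tilde{g}_{mk}^{*}$ by the quantized estimate $\hat{g}_{mk}^{*}$ of (\ref{PtestECF}), and then to \emph{lower-bound} the interference-plus-noise so as to \emph{upper-bound} $\text{SINR}_{k}^{ECF}$. First I would compute the numerator exactly: since $\mathbb{E}\{g_{mk}\hat{g}_{mk}^{*}\}=\gamma_{mk}^{\prime}$, the deterministic combining gain is $\text{DS}_{k}=\sqrt{\rho_{u}\eta_{k}\xi_{r}\xi_{t}}\sum_{m}\gamma_{mk}^{\prime}$, so $|\text{DS}_{k}|^{2}=\rho_{u}\eta_{k}\xi_{r}\xi_{t}(\sum_{m}\gamma_{mk}^{\prime})^{2}$, which is exactly the numerator claimed in the theorem. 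The receiver-noise and quantization-noise terms $\text{RN}_{k}$ and $\text{QN}_{k}$ are then evaluated exactly: because $n_{m}$ and $q_{d,m}$ are zero-mean and independent across APs and of the estimates, all cross-AP contributions vanish and they contribute $\sum_{m}(N+Q_{d,m})\gamma_{mk}^{\prime}$.

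The core of the argument is the treatment of $\text{BU}_{k}$, $\text{IUI}_{kk^{\prime}}$, $\text{THI}_{kk^{\prime}}$ and $\text{RHI}_{k}$, which are the terms contaminated by the inter-AP correlation created by the common UE pilot distortion $\boldsymbol{z}_{t,k}$ entering each estimate through (\ref{eqn:est_ECF1})--(\ref{eqn:est_ECF2}). For every such term of the form $\mathbb{E}\{|\sum_{m}a_{m}|^{2}\}=\sum_{m}\mathbb{E}\{|a_{m}|^{2}\}+\sum_{m\ne n}\mathbb{E}\{a_{m}\}\mathbb{E}\{a_{n}^{*}\}+\sum_{m\ne n}\text{Cov}(a_{m},a_{n})$ I would retain the same-AP diagonal moments and the cross-AP product of means, but discard the cross-AP connected covariance. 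For the signal ($k^{\prime}=k$) contribution inside $\text{THI}_{kk}=\sqrt{\xi_{r}}\,w_{t,k}\sum_{m}g_{mk}\hat{g}_{mk}^{*}$, factoring out the independent $w_{t,k}$ and applying this decomposition to $g_{mk}\hat{g}_{mk}^{*}$ (whose mean is $\gamma_{mk}^{\prime}$) produces the coherent term $(\sum_{m}\gamma_{mk}^{\prime})^{2}$ that appears with coefficient $\xi_{r}(1-\xi_{t})$. The receiver distortion $w_{r,m}$ is independent across APs, so $\text{RHI}_{k}$ has no cross-AP part and is computed exactly; its self-correlation of the desired channel leaves the residual $\rho_{u}\eta_{k}(1-\xi_{r})\sum_{m}\gamma_{mk}^{\prime 2}$. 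Finally, summing the diagonal pieces of the signal, transmit-HI and receiver-HI contributions for each $k^{\prime}$, the hardware factors collapse as $\xi_{r}\xi_{t}+\xi_{r}(1-\xi_{t})+(1-\xi_{r})=1$, which yields the clean interference term $\sum_{k^{\prime}}\rho_{u}\eta_{k^{\prime}}\Omega_{kk^{\prime}}^{\prime}$.

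Collecting the exact numerator, the exactly-evaluated $\text{RN}_{k}$, $\text{QN}_{k}$ and $\text{RHI}_{k}$ terms, and the lower bounds on the correlated terms gives precisely the denominator of $\text{SINR}_{k,UB}^{ECF}$; since every interference term was bounded below, the stated upper bound on the SINR follows. The main obstacle is the rigorous justification of discarding the cross-AP connected covariances: I must show that $\text{Cov}(g_{mk^{\prime}}\hat{g}_{mk}^{*},g_{nk^{\prime}}\hat{g}_{nk}^{*})\ge 0$ for $m\ne n$, these covariances being nonzero precisely because the shared distortion $\boldsymbol{z}_{t,k}$ correlates $\hat{g}_{mk}$ and $\hat{g}_{nk}$ across different APs, so that deleting them can only shrink the denominator. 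A secondary difficulty is that $\bar{y}_{p,mk}$ is a product of Gaussians and hence non-Gaussian, so the fourth-order moments $\mathbb{E}\{|g_{mk^{\prime}}|^{2}|\hat{g}_{mk}|^{2}\}$ must be computed directly from (\ref{eqn:Rxpilot})--(\ref{eqn:Rxdata}) rather than by invoking Gaussianity; I expect the pilot orthogonality $\boldsymbol{\varphi}_{k}^{H}\boldsymbol{\varphi}_{k^{\prime}}=\delta(k-k^{\prime})$ and the mutual independence of the distortions to reduce these diagonal moments to the $\beta_{mk^{\prime}}\gamma_{mk}^{\prime}$ and $\gamma_{mk}^{\prime 2}$ forms used above.
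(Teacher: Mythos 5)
Your overall device---lower-bound each interference variance by discarding non-negative correlation terms, hence upper-bound the SINR---is the same in spirit as the paper's, and your treatment of the numerator and of $\mathrm{RN}_k$, $\mathrm{QN}_k$ is correct. But your accounting contains a genuine gap: you propose to \emph{retain the same-AP diagonal moments exactly} (and to evaluate $\mathrm{RHI}_k$ exactly), discarding only the cross-AP connected covariances. That plan cannot land on the stated formula. Even within a single AP, $\hat{g}_{mk}$ is contaminated by the \emph{other} users' channels: from the pilot observation in (\ref{eqn:CFEMMSE}), $\bar{y}_{p,mk}$ contains $\boldsymbol{\varphi}_{k}^{H}\sum_{k^{\prime}}\sqrt{\xi_{r}}g_{mk^{\prime}}\boldsymbol{z}_{t,k^{\prime}}$ and $\boldsymbol{\varphi}_{k}^{H}\boldsymbol{z}_{r,m}$, whose conditional variances depend on $\{g_{mk^{\prime}}\}$, so pilot orthogonality does \emph{not} decouple $g_{mk^{\prime}}$ from $\tilde{g}_{mk}$ (hence $\hat{g}_{mk}$). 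The exact same-AP fourth moments therefore carry non-negative excess pieces of the form $\rho_{p}\lambda_{mk}^{2}\beta_{mk^{\prime}}^{2}$ and $\tfrac{1}{\tau}\tfrac{\beta_{mk^{\prime}}^{2}}{\beta_{mk}^{2}}\gamma_{mk}^{2}$ (these are exactly the $\Lambda_{kk^{\prime}}$- and diagonal-$\Gamma$-type terms visible in the paper's Appendices A--C), and the exact $\mathbb{E}\{|\mathrm{RHI}_{k}|^{2}\}$ likewise contains $\rho_{p}$-proportional terms through $\mathbb{E}\{|g_{mk^{\prime}}|^{2}|\hat{g}_{mk}|^{2}\}$, so $\mathrm{RHI}_k$ is \emph{not} exact under the theorem's denominator. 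Similarly, retaining the exact diagonal of $\mathrm{BU}_k$ requires $\mathbb{E}\{|\hat{g}_{mk}|^{4}\}$, which is not $2\gamma_{mk}^{\prime 2}$ since $\hat{g}_{mk}$ is non-Gaussian. Your closing expectation that orthogonality reduces the diagonal moments to the $\beta_{mk^{\prime}}\gamma_{mk}^{\prime}$ and $\gamma_{mk}^{\prime 2}$ forms is therefore false; executed as written, your derivation would produce a valid but \emph{different} (tighter, same-AP-exact) upper bound, not the theorem's expression.

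The paper's sketch reaches the stated formula in one stroke by a different relaxation: assume the estimation error $e_{mk}\sim\mathcal{CN}(0,\beta_{mk}-\gamma_{mk})$ is \emph{independent} of the estimate, substitute $g_{mk}=\hat{g}_{mk}+q_{p,mk}+e_{mk}$ everywhere, and compute as in the ideal-Gaussian case; this independence assumption simultaneously zeroes the non-negative excess expectations on the diagonal \emph{and} off the diagonal, which is what yields exactly $\sum_{k^{\prime}}\rho_{u}\eta_{k^{\prime}}\Omega_{kk^{\prime}}^{\prime}$ (with the collapse $\xi_{r}\xi_{t}+\xi_{r}(1-\xi_{t})+(1-\xi_{r})=1$ you correctly identified), the coherent $\xi_{r}(1-\xi_{t})\big(\sum_{m}\gamma_{mk}^{\prime}\big)^{2}$, and the residual $(1-\xi_{r})\sum_{m}\gamma_{mk}^{\prime 2}$ from $\mathbb{E}\{|g_{mk}|^{2}|\hat{g}_{mk}|^{2}\}=\gamma_{mk}^{\prime}\beta_{mk}+\gamma_{mk}^{\prime 2}$. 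To repair your route you would have to extend your non-negativity argument from the cross-AP connected covariances (where your observation about the shared distortion $\boldsymbol{z}_{t,k^{\prime}}$ is sound, and is indeed the point the paper's sketch leaves implicit) to the same-AP excess terms as well, and only then does discarding them reproduce the claimed $\text{SINR}_{k,UB}^{ECF}$.
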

\begin{proof}[Sketch of Proof]
By assuming that the estimation error $e_{mk}\sim\mathcal{CN}(0,\beta_{mk}-\gamma_{mk})$ is independent of the estimated channel, one can compute $\mathbb{E}\!\left\lbrace{\!\lvert\text{BU}_{k}\rvert}^2\!\right\rbrace$, $\mathbb{E}\!\!\left\lbrace{\!\lvert\text{IUI}_{kk^{\prime}}\rvert}^2\!\right\rbrace$, $\mathbb{E}\!\!\left\lbrace{\!\lvert\text{THI}_{kk^{\prime}}\rvert}^2\!\right\rbrace$, and $\mathbb{E}\!\!\left\lbrace{\!\lvert\text{RHI}_{k}\rvert}^2\!\right\rbrace$  by replacing $g_{mk} = \underbrace{\hat{g}_{mk} + q_{p,mk}}_{\tilde{g}_{mk}} + e_{mk}$. Hence, because of independency assumption, some \emph{non-negative} expected values become zero. Thus, a lower-bound for the variance of interference terms are obtained and as a results, we have an upper bound for the achievable rate.
\end{proof}
\begin{Remark}
 For perfect hardware units at the UEs and APs, and infinite fronthaul capacity, the upper and lower bounds are tight and tend to the result obtained in \cite[equation (27) with orthogonal pilots]{ngo2017cell}.
\end{Remark}

Now, based on the rate-distortion theory, we need to derive the variance of quatization noises as functions of $C_m$. For forwarding the quatized data signal over the LC-FHL, the same result as CFE holds, see (\ref{DBHCFE}). For qunatizing CSI and forwarding them over test channel (\ref{PtestECF}), we have
\begin{equation}\label{PBHECF}
C_{p,m}\! =\! \frac{1}{T}I(\boldsymbol{\tilde{g}}_{m};\boldsymbol{\hat{g}}_{m})\! = \! \frac{1}{T}\!\Big[\!h(\boldsymbol{\tilde{g}}_{m})\!-\!h(\boldsymbol{\tilde{g}}_{m}|\boldsymbol{\hat{g}}_{m})\!\Big] \!\leq\! \frac{1}{T}\!\log\!\!\Bigg[\frac{\lvert{\mathbb{E}\left\lbrace\boldsymbol{\tilde{g}}_{m}\boldsymbol{\tilde{g}}_{m}^H\right\rbrace}\rvert}{\lvert{\mathbb{E}\left\lbrace\boldsymbol{q}_{p,m}\boldsymbol{q}_{p,m}^H\right\rbrace}\rvert}\Bigg] =\! \frac{1}{T}\!\!\sum\limits_{k=1}^{K}\!\!\log\!\!\Bigg[\frac{\gamma_{mk}}{Q_{p,mk}}\Bigg].
\end{equation}
where, $\boldsymbol{\tilde{g}_{m}} := [\tilde{g}_{m1},\tilde{g}_{m2},...,\tilde{g}_{mK}]^{T}$ and $\boldsymbol{\hat{g}_{m}} := [\hat{g}_{m1},\hat{g}_{m2},...,\hat{g}_{mK}]^{T}$.
\begin{Remark}
It is worth mentioning that finding optimal values of $Q_{p,mk}$ to maximize the achievable sum-rate is a challenging task, due to non-convexity of the achievable rate as a function of these variables. Here, exhaustive search due to the large number of APs and UEs is prohibitively time demanding. Thus, we propose a water-filling based low-complexity approach in which the UEs with better channels are allocated more fronthaul capacity for CSI transmission. For the proposed scheme, we offer to find $Q_{p,mk}$ in (\ref{PBHECF}) such that the following equality satisfies.
\begin{equation}\label{LCEFC}
  \log\Big[\frac{\gamma_{mk}}{Q_{p,mk}}\Big] = \frac{\gamma_{mk}}{\sum_{k=1}^{K}\gamma_{mk}}TC_{p,m}.
\end{equation}
Moreover, optimal values of $C_{p,m}$ and $C_{d,m}$ can be found by a simple one-dimensional line search to have $C_{p,m}+C_{d,m}=C_m$. Noting that the proposed approaches are only rely on large-scale fading parameters which vary slowly hence, can be carried out at the CU with a slight load on the fronthaul link. Through the numerical results, it is shown that for very limited-capacity FHLs, this approach boosts the performance of the system substantially.
\end{Remark}

\begin{proposition}
  At high SNR regime, if $\sum\limits_{k^{\prime}=1}^{K}\!\!\gamma_{mk^{\prime}}^{\infty}\!>\!K\gamma_{mk}^{\infty}$ and $C_{p,m}\!>\!C_{m,th}$, where $C_{p,m}$ is the fronthaul capacity allocated for CSI transmission, estimating the channels at the CU results in smaller estimation error.
 \begin{equation}\label{eqn:remark3}
  C_{m,th}\!=\!\Big[\!\frac{T}{K}-\theta_{2}T\!\Big]^{\!-1}\!\!\!\!\log\!\!\left(\!\!\theta_{1}^{-1}\sum\limits_{k^{\prime}=1}^{K}\beta_{mk^{\prime}}\!\!\right)\!,
\end{equation}
where
\begin{equation*}
    \theta_{1} = \xi_{r}\xi_{t}\tau\beta_{mk}\!+\!(1\!-\!\xi_{r}\xi_{t})\!\!\!\sum\limits_{k^{\prime}=1}^{K}\!\!\beta_{mk^{\prime}}, \ \ \ \theta_{2} = \frac{\gamma_{mk}^{\infty}}{\sum_{k=1}^{K}\!\gamma_{mk}^{\infty}}, \ \ \ \gamma_{mk}^{\infty} = \theta_{1}^{-1}\xi_{r}\xi_{t}\tau\beta_{mk}^{2}\!.
\end{equation*}
\end{proposition}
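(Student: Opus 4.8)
The plan is to reduce the assertion to a comparison of the variances of the two channel estimates that end up at the CU, evaluate those variances in the high-SNR limit, and then isolate $C_{p,m}$. Both the CFE estimate $\tilde g_{mk}$ and the ECF quantized estimate $\hat g_{mk}$ are LMMSE-type estimates whose error is uncorrelated with the estimate, so their mean-square errors are $\beta_{mk}-\gamma_{mk}^{CFE}$ and $\beta_{mk}-\gamma'_{mk}$ respectively, where $\gamma_{mk}^{CFE}$ is built from (\ref{eqn:CFElambda}) and $\gamma'_{mk}=\gamma_{mk}-Q_{p,mk}$ is the quantized ECF variance established after (\ref{PtestECF}). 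Hence ``estimating at the CU gives smaller error'' is \emph{exactly} the inequality $\gamma_{mk}^{CFE}>\gamma'_{mk}$, and the entire proof consists of turning this into a threshold on $C_{p,m}$.

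First I would take $\rho_p\to\infty$ in both variances. For ECF the channel is estimated before quantization, so from (\ref{eqn:est_ECF2}) the term $N$ is negligible and $\gamma_{mk}\to\gamma_{mk}^{\infty}=\theta_1^{-1}\xi_r\xi_t\tau\beta_{mk}^2$, which recovers the definition in the statement. Inserting the water-filling allocation (\ref{LCEFC}) gives $Q_{p,mk}=\gamma_{mk}2^{-(\gamma_{mk}/\sum_{k'}\gamma_{mk'})TC_{p,m}}\to\gamma_{mk}^{\infty}2^{-\theta_2 TC_{p,m}}$, so $\gamma'_{mk}\to\gamma_{mk}^{\infty}(1-2^{-\theta_2 TC_{p,m}})$. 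For CFE the pilot is quantized before estimation; solving (\ref{PBHCFE}) for the common $Q_{p,m}$ gives $Q_{p,m}=(\rho_p\sum_{k'}\beta_{mk'}+N)/(2^{TC_{p,m}/K}-1)$, which grows \emph{linearly} in $\rho_p$. Substituting this, together with $\tau=K$, into the denominator of (\ref{eqn:CFElambda}) and letting $\rho_p\to\infty$ yields $\gamma_{mk}^{CFE}\to\gamma_{mk}^{\infty}\,\theta_1/(\theta_1+D)$ with $D=\sum_{k'}\beta_{mk'}/(2^{TC_{p,m}/K}-1)$; the key point is that the quantization term survives the high-SNR limit on an equal footing with the signal and impairment terms precisely because $Q_{p,m}\propto\rho_p$.

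Next I would form $\gamma_{mk}^{CFE}>\gamma'_{mk}$, cancel the common factor $\gamma_{mk}^{\infty}$, and reduce it to $2^{-\theta_2 TC_{p,m}}>D/(\theta_1+D)$. For $C_{p,m}$ beyond the threshold the fronthaul is large enough that $2^{TC_{p,m}/K}\gg1$ and $\theta_1 2^{TC_{p,m}/K}\gg\sum_{k'}\beta_{mk'}$, so $D\approx\sum_{k'}\beta_{mk'}\,2^{-TC_{p,m}/K}$ and the inequality collapses to $2^{TC_{p,m}(1/K-\theta_2)}>\theta_1^{-1}\sum_{k'}\beta_{mk'}$. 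The first hypothesis $\sum_{k'}\gamma_{mk'}^{\infty}>K\gamma_{mk}^{\infty}$ is exactly $1/K-\theta_2>0$, so the exponent coefficient is positive; taking base-two logarithms and dividing by $T(1/K-\theta_2)$ then gives $C_{p,m}>C_{m,th}$ as in (\ref{eqn:remark3}).

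The main obstacle is the asymmetry between the two limits and the legitimacy of the large-capacity expansion. On the ECF side the quantization enters after estimation and the high-SNR expression is essentially exact, whereas on the CFE side the pilot quantization noise scales like $\rho_p$ and must be carried through the LMMSE denominator, after which a first-order expansion of $\theta_1/(\theta_1+D)$ is what produces the \emph{clean} closed form $C_{m,th}$. I would therefore spend the most care delineating the range of $C_{p,m}$ on which this expansion is faithful rather than merely asymptotic: one checks that $\theta_2>0$ makes the positive-threshold regime $\sum_{k'}\beta_{mk'}>\theta_1$ self-consistent, so that the derived threshold is genuine and not an artifact of over-approximation near $C_{p,m}\to0$.
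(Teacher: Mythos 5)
Your proposal is correct and takes essentially the same route as the paper's own proof: you compute the high-SNR limits of the two estimate variances (solving (\ref{PBHCFE}) for the common pilot-quantization noise, which scales with $\rho_p$ and hence survives the limit, and applying the water-filling rule (\ref{LCEFC}) on the ECF side to get $\gamma'_{mk}\to\gamma_{mk}^{\infty}\left(1-2^{-\theta_2 TC_{p,m}}\right)$), compare the resulting MSEs, and invoke the same large-$C_{p,m}$ approximation under $1/K>\theta_2$ to extract $C_{m,th}$. Your intermediate inequality $2^{-\theta_2 TC_{p,m}}>D/(\theta_1+D)$ with $D=\sum_{k'=1}^{K}\beta_{mk'}\big/\big(2^{TC_{p,m}/K}-1\big)$ is algebraically identical to the paper's $\left(2^{C_{p,m}T/K}-1\right)\left(2^{C_{p,m}\theta_{2}T}-1\right)^{-1}>\theta_{1}^{-1}\sum_{k'=1}^{K}\beta_{mk'}$, so the two arguments coincide step for step, with your remarks on the validity range of the expansion being a minor refinement rather than a different method.
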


\begin{proof}
  At high SNR regime, from equations, (\ref{eqn:est_CFE1}), (\ref{PBHCFE}), (\ref{eqn:est_ECF1}), (\ref{PtestECF}), and (\ref{PBHECF}), one can obtain the variance of the estimated channels for the two schemes as follows,
  \begin{equation*}
    \begin{split}
       \gamma_{mk}^{\text{CFE}} &= \lim_{\rho_{p}\rightarrow\infty} \gamma_{mk}\overset{(\ref{eqn:est_CFE1}, \ref{PBHCFE})}{=} \frac{\xi_{r}\xi_{t}\tau\beta_{mk}^{2}}{\xi_{r}\xi_{t}\tau\beta_{mk} + \Big[(1-\xi_{r}\xi_{t}) + \Big[2^{\frac{T}{K}C_{p,m}}-1\Big]^{-1}\Big]\sum\limits_{k^{\prime}=1}^{K}\beta_{mk^{\prime}}}\\
       \gamma_{mk}^{\text{ECF}} &= \lim_{\rho_{p}\rightarrow\infty} \gamma_{mk}^{\prime}\overset{(\ref{eqn:est_ECF1}, \ref{PtestECF}, \ref{PBHECF})}{=}\gamma_{mk}^{\infty} - \gamma_{mk}^{\infty}\times 2^{-\gamma_{mk}^{\infty}TC_{p,m}\Big[{\sum\limits_{k^{\prime}=1}^{K}\gamma_{mk^{\prime}}^{\infty}}\Big]^{-1}}.
    \end{split}
  \end{equation*}
  For CFE, to have smaller estimation error at the CU following inequality should hold.
  \begin{equation*}
    \beta_{mk}-\gamma_{mk}^{\text{CFE}} < \beta_{mk}-\gamma_{mk}^{\text{ECF}}.
  \end{equation*}
  After mathematical manipulations, above inequality reduces to, $\left(2^{C_{p,m}T/K}-1\right)\left(2^{C_{p,m}\theta_{2}T}-1\right)^{-1}>\theta_{1}^{-1}\sum\limits_{k^{\prime}=1}^{K}\beta_{mk^{\prime}}$; By further assuming that $1\!>\!K\theta_{2}$ and $C_{p,m}$ is large enough, this inequality can be approximated as  $\left(2^{C_{p,m}(T/K-\theta_{2}T)}\right)>\theta_{1}^{-1}\sum\limits_{k^{\prime}=1}^{K}\beta_{mk^{\prime}}$. from the last inequality $C_{m,th}$ is obtained.
\end{proof}

\begin{proposition}
   For a single-user CF-mMIMO scenario, $K = 1$, and high SNR regime, the ECF strategy outperforms the CFE one.
\end{proposition}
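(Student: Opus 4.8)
The plan is to turn the rate comparison into an SINR comparison. Since both strategies use $\tau=K=1$ pilot symbol, the achievable rates in \eqref{GeneralRate} carry the same prefactor $\tfrac{T-1}{T}$, so $R_{1}^{ECF}>R_{1}^{CFE}$ is equivalent to $\text{SINR}_{1}^{ECF}>\text{SINR}_{1}^{CFE}$. I would in fact work with the lower bound $\text{SINR}_{1,LB}^{ECF}$ of Theorem~2, so that establishing $\text{SINR}_{1,LB}^{ECF}>\text{SINR}_{1}^{CFE}$ already suffices. First I would specialize both SINR formulas to $K=1$: with $\tau=1$ and $\boldsymbol{\varphi}_{1}^{H}\boldsymbol{\varphi}_{1}=1$ every inter-user term ($k'\neq k$) drops out, the denominators collapse to a single self-term plus the noise/quantization aggregate $\mathcal{E}_{1}$ (respectively $\mathcal{E}_{1}'$), and the numerators become $\rho_{u}\eta_{1}\xi_{r}\xi_{t}(\sum_{m}\gamma_{m1})^{2}$ and $\rho_{u}\eta_{1}\xi_{r}\xi_{t}(\sum_{m}\gamma_{m1}')^{2}$.

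Next I would pin down the high-SNR numerators using Proposition~1. Although the strict hypothesis $\sum_{k'}\gamma_{mk'}^{\infty}>K\gamma_{mk}^{\infty}$ of that proposition degenerates to an \emph{equality} when $K=1$, substituting $\tau=1$ and $\sum_{k'}\beta_{mk'}=\beta_{m1}$ into its limiting variances yields $\gamma_{m1}^{CFE}=\gamma_{m1}^{ECF}=\xi_{r}\xi_{t}\beta_{m1}\big(1-2^{-TC_{p,m}}\big)$. Hence the two SINR numerators \emph{coincide} at high SNR, and the whole comparison reduces to the denominators: ECF outperforms CFE precisely when its interference-plus-noise denominator is the smaller one. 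This is the key structural observation, and it explains why the advantage of ECF for $K=1$ cannot come from channel-estimation quality (which is identical here) but must come entirely from how the fronthaul distortion enters the effective signal.

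The decisive step is therefore the denominator inequality. Writing out $\mathcal{E}_{1}$ and $\mathcal{E}_{1}'$ at $K=1$, the ECF denominator carries the extra \emph{subtracted} correction $\rho_{u}\eta_{1}\big[\big(\xi_{r}(1-\xi_{t})+\rho_{p}(1-\xi_{r})^{2}\big)\sum_{m}Q_{p,m1}^{2}-2\xi_{r}\xi_{t}(\sum_{m}Q_{p,m1})(\sum_{m}\gamma_{m1}')\big]$, which is absent in CFE and which encodes the fact that in ECF the fronthaul distortion is injected into the MMSE estimate (a sufficient statistic that is partly de-correlated from the useful signal) rather than into the raw pilot. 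I would substitute the rate--distortion values $Q_{d,m}=\big(\rho_{u}\eta_{1}\beta_{m1}+N\big)/\big(2^{TC_{d,m}/(T-1)}-1\big)$ and $Q_{p,m1}=\gamma_{m1}2^{-TC_{p,m}}$ from \eqref{DBHCFE} and \eqref{PBHECF}, and then compare the two denominators coefficient by coefficient in the leading order of $\rho_{u}$, keeping track of the differing $\Omega,\Gamma,\Lambda$ factors (in ECF these use the \emph{unquantized} AP estimate, in CFE the already-quantized one).

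The hard part will be controlling this denominator comparison cleanly in the limit. Two subtleties must be handled with care: $Q_{d,m}$ grows linearly in $\rho_{u}$, so the data-quantization contribution to $\mathcal{E}_{1}$ and $\mathcal{E}_{1}'$ is of the same order as the numerator and cannot be discarded, whereas $Q_{p,m1}$ stays bounded; and the ECF correction contains a $\rho_{p}(1-\xi_{r})^{2}$ factor, so the order in which $\rho_{p}$ and $\rho_{u}$ tend to infinity matters and must be fixed so that the lower bound of Theorem~2 remains the governing, non-vacuous estimate. Once the scalings are fixed and the equal-numerator reduction is invoked, the inequality $\text{SINR}_{1,LB}^{ECF}>\text{SINR}_{1}^{CFE}$ should follow from the sign of the extra ECF correction together with the term-by-term comparison of $\Omega,\Gamma,\Lambda$, which completes the proof.
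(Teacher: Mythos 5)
Your first reduction is sound and in fact coincides with the paper's: for $K=\tau=1$ the two limiting estimate variances agree, $\lim_{\rho_p\to\infty}\gamma_{m1}=\lim_{\rho_p\to\infty}\gamma'_{m1}=\Upsilon_m=\xi_r\xi_t\beta_m\left(1-2^{-TC_{p,m}}\right)$, so the SINR numerators coincide and everything hinges on the denominators; you also correctly observe that $Q_{d,m}$ grows linearly in $\rho_u$, which is exactly why the data-quantization term (the paper's $\mathcal{X}_3$) survives the joint limit. The genuine gap is your central strategic choice: proving $\text{SINR}_{1,LB}^{ECF}>\text{SINR}_{1}^{CFE}$ via Theorem~2 cannot succeed, because that inequality is false in the very regime of the proposition. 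For $K=1$ the lower-bound aggregate reads
\begin{equation*}
\mathcal{E}'_1=\sum_{m=1}^{M}(N+Q_{d,m})\gamma'_{m1}-\rho_u\eta_1\Big(\xi_r(1-\xi_t)+\rho_p(1-\xi_r)^2\Big)\sum_{m=1}^{M}Q_{p,m1}^2+2\rho_u\eta_1\xi_r\xi_t\Big(\sum_{m=1}^{M}Q_{p,m1}\Big)\Big(\sum_{m=1}^{M}\gamma'_{m1}\Big),
\end{equation*}
i.e., the dominant piece of your ``extra subtracted correction'' enters with a double negative and \emph{adds} $2\rho_u\eta_1\xi_r\xi_t\big(\sum_m Q_{p,m1}\big)\big(\sum_m\gamma'_{m1}\big)$ to the ECF-LB denominator; you mis-assess its sign when asserting the inequality ``should follow from the sign of the extra ECF correction.'' The cleanest counterexample is perfect hardware $\xi_r=\xi_t=1$: there the $Q_{p,m1}^2$ coefficient and all HI terms vanish (so no order-of-limits subtlety remains), the numerators and the terms $\sum_m\Upsilon_m\beta_m$ and $\sum_m Q_{d,m}\Upsilon_m$ match exactly between the two strategies, yet $Q_{p,m1}\to\beta_m 2^{-TC_{p,m}}>0$ for finite $C_{p,m}$, so $\text{SINR}_{1,LB}^{ECF}<\text{SINR}_{1}^{CFE}$ strictly, and by continuity this persists for near-perfect hardware. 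Moreover, when $\xi_r<1$ the term $\rho_p(1-\xi_r)^2\sum_m Q_{p,m1}^2$ diverges in the joint limit and renders the lower bound vacuous, so the scaling issue you flagged is not fixable by choosing an order of limits — the LB route is blocked, not merely delicate.

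The paper does the opposite of what you propose: it compares the exact CFE SINR (Theorem~1) against the ECF \emph{upper} bound (Theorem~3). Sending $(\rho_u,\rho_p)\to\infty$ jointly, both limits take the common form $\xi_r\xi_t\mathcal{X}_0/\left(\mathcal{X}_1+c_0\mathcal{X}_0+c_2\mathcal{X}_2+\mathcal{X}_3\right)$ with the $\mathcal{X}_i$ built from $\Upsilon_m$, and the conclusion follows from the elementary coefficient comparison $a=\xi_r\big(1-\xi_r+\tfrac{1-\xi_r}{\xi_r}\big)\geq\xi_r(1-\xi_r)$ and $b=(1-\xi_r)\big(1+\tfrac{1}{\xi_r\xi_t}+\tfrac{1-\xi_r}{\xi_t}\big)\geq 1-\xi_r$, with no term-by-term bookkeeping of $Q$-values needed. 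Your instinct that the lower bound would be the logically airtight vehicle is reasonable in principle (the paper's use of the upper bound tacitly treats it as the ECF performance at high SNR), but since the lower-bound inequality demonstrably fails, completing the proof requires either adopting the paper's Theorem-3 comparison or sharpening the ECF lower bound itself.
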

\begin{proof}
  For $K=\tau = 1$, from Theorems 1 and 3, and equations (\ref{eqn:est_CFE1}), (\ref{PBHCFE}), (\ref{eqn:est_ECF1}), (\ref{PtestECF}), and (\ref{PBHECF}), it is proven that,
  \begin{equation*}
  \begin{split}
       \Upsilon_{m} &= \lim_{\rho_{p}\rightarrow\infty} \gamma_{mk}\overset{(\ref{eqn:est_CFE1}, \ref{PBHCFE})}{=} \lim_{\rho_{p}\rightarrow\infty} \gamma_{mk}^{\prime}\overset{(\ref{eqn:est_ECF1}, \ref{PtestECF}, \ref{PBHECF})}{=}\xi_{r}\xi_{t}\left(1-2^{-TC_{p,m}}\right)\beta_{m},\\
       \text{SINR}_{\text{CFE}}^{\infty} &=\lim_{(\rho_{u},\rho_{p})\rightarrow\infty}\text{SINR}_{\text{CFE}}\overset{\text{Th. }1}{=} \frac{\xi_{r}\xi_{t}\mathcal{X}_{0}}{\mathcal{X}_{1}+a\mathcal{X}_{0}+b\mathcal{X}_{2}+\mathcal{X}_{3}},\\
       \text{SINR}_{\text{ECF}}^{\infty} &=\lim_{(\rho_{u},\rho_{p})\rightarrow\infty}\text{SINR}_{\text{ECF}}\overset{\text{Th. }3}{=} \frac{\xi_{r}\xi_{t}\mathcal{X}_{0}}{\mathcal{X}_{1}+\xi_{r}(1-\xi_{r})\mathcal{X}_{0}+(1-\xi_{r})\mathcal{X}_{2}+\mathcal{X}_{3}},\\
       \mathcal{X}_{0} &=\! \left(\!\sum\limits_{m=1}^{M}\!\Upsilon_{m}\!\right)^{2}\!\!, \ \ \mathcal{X}_{1}\! =\! \sum\limits_{m=1}^{M}\!\Upsilon_{m}\beta_{m}, \ \ \mathcal{X}_{2}\! =\! \sum\limits_{m=1}^{M}\!\Upsilon_{m}^{2}, \ \ \mathcal{X}_{3}\! =\! \sum\limits_{m=1}^{M}\!\Upsilon_{m}\beta_{m}\left(\!2^{\frac{T}{T-1}C_{d,m}}-1\!\right)^{-1}\!\!\!,\\
       a &=\xi_{r}\left(1-\xi_{r}+\frac{1-\xi_{r}}{\xi_{r}}\right), \ \ b=(1-\xi_{r})\left(1+\frac{1}{\xi_{r}\xi_{t}}+\frac{1-\xi_{r}}{\xi_{t}}\right).
       \end{split}
  \end{equation*}
  Since $0\leq \xi_{r},\xi_{t}\leq 1$, $a\geq \xi_{r}(1-\xi_{r})$ and $b\geq (1-\xi_{r})$ we have $\text{SINR}_{\text{ECF}}^{\infty}\geq \text{SINR}_{\text{CFE}}^{\infty}$.
\end{proof}
\subsection{Estimate-Multiply-Compress-Forward Strategy}
In the following subsections, we present the processing at APs and CU.
\subsubsection{Processing at APs}
AP\textsubscript{$m$} first estimates the UEs' channels via LMMSE estimator, then it separately multiplies the conjugate of the estimated channels to its received signal and forms the following vector.
\begin{equation}\label{eqn:mult_vec}
    \boldsymbol{\tilde{y}}_{m}\! =\!\! [\tilde{g}_{m1}^{*}y_{m}, \ \tilde{g}_{m2}^{*}y_{m}, \ \tilde{g}_{mK}^{*}y_{m}]^T,
\end{equation}
where, $\tilde{g}_{mk}$ denotes the estimated channel of UE\textsubscript{k} given in (\ref{eqn:est_ECF1})--(\ref{eqn:est_ECF2}), and $y_m$ is the received superimposed data signal given in (\ref{eqn:Rxdata}). Afterwards, the AP quatizes $\boldsymbol{\tilde{y}}_{m}$ as follows
\begin{equation*}
    \boldsymbol{\hat{y}}_{m}=\boldsymbol{\tilde{y}}_{m}+\boldsymbol{q}_m,
\end{equation*}
where $\boldsymbol{q}_m$ denotes the quantization noise vector, where $q_{mk}\sim\mathcal{CN}(0,Q_{mk})$ for $k=\{1,2,...,K\}$. To perfectly deliver the quantized vector to the CU over the FHL with capacity $C_m$, we must have
\begin{equation}\label{BHECMF2}
 \begin{split}
  C_{m} &= \frac{T-\tau}{T}I\left(\boldsymbol{\hat{y}}_{m}; \boldsymbol{\tilde{y}}_{m}\right)
  \leq  \! \frac{T-\tau}{T}\log\!\Bigg[\frac{\lvert\mathbb{E}\{\boldsymbol{\tilde{y}}_{m}\boldsymbol{\tilde{y}}_{m}^{H}\}\!+\!\mathbb{E}\{\boldsymbol{q}_{m}\boldsymbol{q}_{m}^{H}\}\rvert}{\lvert\mathbb{E}\{\boldsymbol{q}_{m}\boldsymbol{q}_{m}^{H}\}\rvert}\Bigg] \!=\! \frac{T-\tau}{T}\log\!\Bigg[\lvert I_{K}\!+\!\boldsymbol{\Psi}_{m}\boldsymbol{Q}_{m}^{-1}\rvert\Bigg]\!\\
  &= \! \frac{T-\tau}{T}\!\sum\limits_{k=1}^{K}\!\!\log\Bigg[1+\frac{\boldsymbol{\Psi}_{m}[k,k]}{{Q}_{mk}}\Bigg],\\
  \end{split}
\end{equation}
where,
\begin{equation}\label{BHECMF3}
 \begin{split}
  \boldsymbol{\Psi}_{m}[k,k] =& \rho_{u}\!\sum\limits_{k^{\prime}=1}^{K}\!\!\eta_{k^{\prime}}\beta_{mk^{\prime}}\gamma_{mk} \!+ \!\! \rho_{u}\frac{1-\xi_{t}}{\tau\xi_{t}}\!\Bigg(\!\sum\limits_{k^{\prime}=1}^{K}\!\!\sqrt{\eta_{k^{\prime}}}\frac{\beta_{mk^{\prime}}}{\beta_{mk}}\gamma_{mk}\!\!\Bigg)^{2} \!\!\!+ \!\! \rho_{u}\frac{\!1\!+\!\xi_{r}\!-\!2\xi_{r}\xi_{t}\!}{\tau\xi_{r}\xi_{t}}\!\sum\limits_{k^{\prime}=1}^{K}\!\!\Big(\!\sqrt{\eta_{k^{\prime}}}\frac{\beta_{mk^{\prime}}}{\beta_{mk}}\gamma_{mk}\Big)^{2}\\
  +& \rho_{u}\eta_{k}\gamma_{mk}^{2} + N\gamma_{mk}.
 \end{split}
\end{equation}
$\boldsymbol{\Psi}_{m}[k,k]$ represents the diagonal elements of $\boldsymbol{\Psi}_{m}=\mathbb{E}\{\boldsymbol{\tilde{y}}_{m}\boldsymbol{\tilde{y}}_{m}^{H}\}$. Off-diagonal elements of $\boldsymbol{\Psi}_{m}$ are zero, since the pilots are orthogonal, and the quantization noises and the channels of UEs are mutually independent. Similar to (\ref{LCEFC}), we propose the following low-complexity scheme to find out the distortion values $Q_{mk}$;
\begin{equation}\label{LCWCF}
  \log\Big[1+\frac{\boldsymbol{\Psi}_{m}[k,k]}{Q_{mk}}\Big] = \frac{\boldsymbol{\Psi}_{m}[k,k]}{\sum_{k=1}^{K}\boldsymbol{\Psi}_{m}[k,k]}\frac{T}{T-\tau}C_{m}.
\end{equation}

\subsubsection{Processing at CU}
After receiving all the quatized vectors transmitted by APs, the CU forms the following vector to recover data of UE\textsubscript{k},
\begin{equation}\label{eqn:WCFatCU}
\boldsymbol{\hat{y}}_{k} = [\tilde{g}_{1k}^{*}y_{1}, \ \tilde{g}_{2k}^{*}y_{2},..., \ \tilde{g}_{mk}^{*}y_{m},..., \ \tilde{g}_{Mk}^{*}y_{M}]^T + [q_{1k},\ q_{2k},...,\ q_{mk},...,\ q_{Mk}]^T.
\end{equation}
By applying UatF one can rewrite (\ref{eqn:WCFatCU}) as follows
\begin{equation}\label{eqn:WCFatCUrewrite}
\boldsymbol{\hat{y}}_{k} = \boldsymbol{b}_{k}s_{k} + \boldsymbol{z}_{k}.
\end{equation}
in which $\boldsymbol{b}_{k}$ and $\boldsymbol{z}_{k}$ are $M\times 1$ vectors where their $m\textsuperscript{th}$-element are defined as $\sqrt{\rho_{u}\eta_{k}\xi_{r}\xi_{t}}\mathbb{E}\{\tilde{g}_{mk}^{*}g_{mk}\}$ and $\hat{y}_{mk}-\sqrt{\rho_{u}\eta_{k}\xi_{r}\xi_{t}}\mathbb{E}\{\tilde{g}_{mk}^{*}g_{mk}\}s_{k}$, respectively, wherein $\hat{y}_{mk}$ is the $m\textsuperscript{th}$-element of $\boldsymbol{\hat{y}}_{k}$.

Afterwards, the CU employs a linear receiver to generate effective received signal of UE\textsubscript{k} as
\begin{equation}\label{eqn:WCFatCUrewrite}
r_k:=\boldsymbol{u}_{k}^{H}\boldsymbol{\hat{y}}_{k} = \boldsymbol{u}_{k}^{H}\boldsymbol{b}_{k}s_{k} + \boldsymbol{u}_{k}^{H}\boldsymbol{z}_{k} = \sum\limits_{m=1}^{M}u_{mk}^{*}(\tilde{g}_{mk}^{*}y_{m} + q_{mk}).
\end{equation}
The linear receiver that maximizes SINR is the MMSE receiver \cite{tse2005fundamentals}. So, $\boldsymbol{u}_{k}^{\text{Opt}} = \mathcal{K}_{z_{k}}^{-1}\boldsymbol{b}_{k}$ is the optimal linear receiver, where $\mathcal{K}_{z_{k}} = \mathbb{E}\{\boldsymbol{z}_{k}\boldsymbol{z}_{k}^{H}\}$.
\begin{theorem}
For the EMCF strategy, the achievable rate of UE\textsubscript{$k$} becomes
\begin{equation}\label{eqn:RateWCF}
  \begin{split}
    R_{k}\!=&\frac{T-\tau}{T}\log_{2}\Big(1+\boldsymbol{b}_{k}^{H}\mathcal{K}_{z_{k}}^{-1}\boldsymbol{b}_{k}\Big),\\
    \mathcal{K}_{z_{k}}[m,n] =& \rho_{u}\xi_{r}\frac{1-\xi_{t}}{\tau\xi_{t}}\sum\limits_{k^{\prime}=1}^{K}\eta_{k^{\prime}}\frac{\beta_{mk^{\prime}}\beta_{nk^{\prime}}}{\beta_{mk}\beta_{nk}}\gamma_{mk}\gamma_{nk}, \ \ \ \{m\neq n\} \in \{1,2,...,M\},\\
    \mathcal{K}_{z_{k}}[m,m] =& \rho_{u}\!\sum\limits_{k^{\prime}=1}^{K}\!\!\eta_{k^{\prime}}\!\Big[\gamma_{mk}\beta_{mk^{\prime}} \!-\!\frac{1}{\tau}\frac{\beta_{mk^{\prime}}^{2}}{\beta_{mk}^{2}}\gamma_{mk}^{2}\!+\!\rho_{p}\beta_{mk^{\prime}}^{2}\lambda_{mk}^{2} \!\Big]\!+\! \rho_{u}\eta_{k}(1-\xi_{r}\xi_{t})\gamma_{mk}^{2}\!+\!N\gamma_{mk}\!+\!Q_{mk},\\
    \boldsymbol{b}_{k} =& \sqrt{\rho_{u}\eta_{k}\xi_{r}\xi_{t}}[\gamma_{1k}, \ \gamma_{2k}, \ ..., \ \gamma_{mk}, \ ..., \ \gamma_{Mk}]^T.
  \end{split}
\end{equation}
where $\mathcal{K}_{z_{k}}[m,n]$ refers to the $m^{\text{th}}$-row and $n^{\text{th}}$-column of the $M\times M$ matrix $\mathcal{K}_{z_{k}}$.
\end{theorem}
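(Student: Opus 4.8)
The plan is to read (\ref{eqn:WCFatCUrewrite}) as a single-symbol MIMO model $\boldsymbol{\hat{y}}_k = \boldsymbol{b}_k s_k + \boldsymbol{z}_k$ in which $s_k$ is the signal, $\boldsymbol{b}_k$ is deterministic, and the aggregate $\boldsymbol{z}_k$ is uncorrelated with $s_k$, and then to invoke the classical matched-filter/MMSE bound. First I would verify $\mathbb{E}\{\boldsymbol{z}_k s_k^*\}=\boldsymbol{0}$ (the inter-user terms are uncorrelated with $s_k$, and the beamforming-uncertainty term is a zero-mean channel fluctuation times $s_k$), so that for any combiner $\boldsymbol{u}_k$ the output SINR is the Rayleigh quotient $|\boldsymbol{u}_k^H\boldsymbol{b}_k|^2/(\boldsymbol{u}_k^H\mathcal{K}_{z_k}\boldsymbol{u}_k)$; maximizing it gives $\boldsymbol{u}_k^{\text{Opt}}=\mathcal{K}_{z_k}^{-1}\boldsymbol{b}_k$ and the optimal value $\boldsymbol{b}_k^H\mathcal{K}_{z_k}^{-1}\boldsymbol{b}_k$, exactly as recalled just before the statement. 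Treating $\boldsymbol{z}_k$ as worst-case Gaussian and charging the usual $(T-\tau)/T$ prelog for the training phase then produces $R_k$. All that remains is to evaluate the two ingredients $\boldsymbol{b}_k$ and $\mathcal{K}_{z_k}$.

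For $\boldsymbol{b}_k$ I would use the orthogonality property of the LMMSE estimator in (\ref{eqn:est_ECF1})--(\ref{eqn:est_ECF2}): the estimate and the estimation error are uncorrelated, so $\mathbb{E}\{\tilde{g}_{mk}^* g_{mk}\}=\mathbb{E}\{|\tilde{g}_{mk}|^2\}=\gamma_{mk}$, which at once gives $\boldsymbol{b}_k=\sqrt{\rho_u\eta_k\xi_r\xi_t}[\gamma_{1k},\dots,\gamma_{Mk}]^T$. For $\mathcal{K}_{z_k}=\mathbb{E}\{\boldsymbol{z}_k\boldsymbol{z}_k^H\}$ I would substitute $y_m$ from (\ref{eqn:Rxdata}) into $z_{mk}=\tilde{g}_{mk}^* y_m + q_{mk} - \sqrt{\rho_u\eta_k\xi_r\xi_t}\gamma_{mk}s_k$ and split $z_{mk}$ into the same physically meaningful pieces as in the CFE derivation of (\ref{GeneralRate}): beamforming uncertainty, inter-user interference, transmit- and receive-side distortion, receiver noise, and quantization noise. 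The crucial bookkeeping step is to classify the underlying randomness: the channels $\{g_{mk'}\}_{k'}$, the receive distortion $w_{r,m}$, the noise $n_m$, and the quantization noise $q_{mk}$ are AP-private and independent across $m$, whereas the symbols $s_{k'}$, the data-phase transmit distortions $w_{t,k'}$, and --- crucially --- the UE-side pilot distortions $\boldsymbol{z}_{t,j}$ that enter \emph{both} $\tilde{g}_{mk}$ and $\tilde{g}_{nk}$ are shared by all APs.

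For the diagonal entry $\mathcal{K}_{z_k}[m,m]$ I would compute the single-AP second moment of each piece, which reduces to fourth-order Gaussian moments such as $\mathbb{E}\{|\tilde{g}_{mk}|^2|g_{mk'}|^2\}$ together with the analogous products against the transmit/receive distortions; expanding $\tilde{g}_{mk}=\lambda_{mk}\bar{y}_{p,mk}$ and again exploiting the uncorrelatedness of estimate and error yields the four grouped terms plus $N\gamma_{mk}$ and $Q_{mk}$. For the off-diagonal entry ($m\neq n$) every AP-private piece drops out, so only the shared-randomness cross terms survive, and each of them reduces to the single mixed moment $\mathbb{E}\{\tilde{g}_{mk}^* g_{mk'}\,\tilde{g}_{nk}g_{nk'}^*\}$.

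The hard part will be precisely this last moment. Because the small-scale fading is independent across APs, a naive argument would declare $\tilde{g}_{mk}$ and $\tilde{g}_{nk}$ independent and the off-diagonal zero; that is wrong, since the two estimates are coupled through the common pilot distortion $\boldsymbol{z}_{t,j}$. The correct route is to write each estimate as its $\boldsymbol{z}_t$-independent part plus its $\boldsymbol{z}_t$-linear part; the cross expectation then factors into a product of per-AP means (the $\gamma_{mk}\gamma_{nk}$-type pieces, whose removal is exactly the purpose of subtracting the useful signal $\sqrt{\rho_u\eta_k\xi_r\xi_t}\gamma_{mk}s_k$) plus a residual covariance generated solely by the bilinear $\boldsymbol{z}_t$ term, and the pilot orthogonality $\boldsymbol{\varphi}_k^H\boldsymbol{\varphi}_{k'}=\delta(k-k')$ collapses the double sum to its $j=k'$ diagonal, producing the $\propto\beta_{mk'}\beta_{nk'}\gamma_{mk}\gamma_{nk}/(\beta_{mk}\beta_{nk})$ structure with its $(1-\xi_t)/(\tau\xi_t)$ prefactor. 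Tracking exactly which mean and covariance pieces are cancelled by the $s_k$-subtraction and which survive after attaching the $(1-\xi_t)$ factor carried by each distortion --- so that the desired-user contributions collapse to the stated form --- is the delicate, error-prone heart of the computation; everything else is routine moment algebra.
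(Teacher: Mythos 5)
Your proposal is correct and follows essentially the same route as the paper's Appendix~C: the same MMSE-receiver Rayleigh-quotient argument with prelog $(T-\tau)/T$, the same evaluation $\boldsymbol{b}_k=\sqrt{\rho_u\eta_k\xi_r\xi_t}[\gamma_{1k},\dots,\gamma_{Mk}]^T$ via LMMSE orthogonality, and the same diagonal/off-diagonal covariance bookkeeping, including the key observation that the off-diagonal entries survive only through the shared UE-side randomness (notably the common pilot distortion coupling $\tilde{g}_{mk}$ and $\tilde{g}_{nk}$), which is exactly how the paper obtains the mixed moments $\mathbb{E}\{\tilde{g}_{mk}^{*}g_{mk^{\prime}}\tilde{g}_{nk}g_{nk^{\prime}}^{*}\}$ and the $\frac{1-\xi_t}{\tau\xi_t}\frac{\beta_{mk^{\prime}}\beta_{nk^{\prime}}}{\beta_{mk}\beta_{nk}}\gamma_{mk}\gamma_{nk}$ structure. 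No gaps noted.
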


\begin{proof}
  The proof is given in Appendix C.
\end{proof}

\section{Optimal Power Allocation for CFE and ECF Strategies}\label{sec4power}
This section focus on maximizing the sum SE (SSE) of the system by controlling the power of each user for data transmission. Basically, this problem improves the performance of the system by reducing the inter-user interference. The optimization problem is as follows,
\begin{equation}\label{power1}
\mathcal{P}_{1}:
  \begin{cases}
    \begin{aligned}
        &\underset{\{\eta_{k}\geq0\}_{k}}{\text{{maximize}}}
        && \frac{T-\tau}{T}\sum\limits_{k=1}^{K}\log(1+\text{SINR}_{k})\\
        &\text{{subject to}} && \eta_{k}\leq 1, \ \forall k
    \end{aligned}
  \end{cases}
\end{equation}
where $\text{SINR}_{k}$ is obtained either by CFE or ECF, which are given in Theorems 1-3. $\mathcal{P}_{1}$ is a NP-hard problem, so we approximate $\mathcal{P}_{1}$ with the following problem $\mathcal{P}_{2}$,
\begin{theorem}
One can obtain the following geometric programming for CFE to efficiently solve the original problem $\mathcal{P}_{1}$,
 \begin{equation}\label{power2}
  \mathcal{P}_{2}:
    \begin{cases}
      \begin{aligned}
        &\underset{\{\eta_{k}\geq0,\ t_{k}\geq0\}_{k}}{\text{{\textit{\emph{maximize}}}}}
                             && \prod\limits_{k=1}^{K}t_{k}\\
        &\text{{\textit{\emph{subject to}}}} && \frac{t_k}{A_{k}\eta_{k}}\sum\limits_{k^{\prime}=1}^{K}\eta_{k^{\prime}}B_{kk^{\prime}} + \frac{L_k}{A_{k}\eta_{k}}\leq 1,\\
        &                    && \eta_{k}\leq 1, \ k = 1,2,...,K
      \end{aligned}
    \end{cases}
 \end{equation}
 where
 \begin{equation}\label{param}
 \begin{aligned}
        A_{k}&=\rho_{k}\xi_{r}\xi_{t}\Gamma_{kk}, \ \ \ \ \mathcal{S}_{kk^{\prime}}=\rho_{u}\sum\limits_{m=1}^{M}\!\left[2^{\frac{T}{T-\tau}C_{d,m}}-1\right]^{-1}\!\!\!\!\gamma_{mk}\beta_{mk^{\prime}}, \\
        B_{kk^{\prime}}&= \rho_{u}\!\!\Bigg[\!\Omega_{kk^{\prime}}\!\!+\!\! \xi_{r}(\!1\!\!-\!\xi_{t}\!)\!\!\left(\!\!\boldsymbol{\varphi}_{k}^{H}\!\boldsymbol{\varphi}_{k^{\prime}}\!\!-\!\frac{1}{\!\tau\xi_{t}\!}\!\right)\!\Gamma_{kk^{\prime}} \!\! + \!\!\rho_{p}(\!1\!\!-\!\xi_{r}\!)\!\Big(\!\tau\xi_{r}\xi_{t}\boldsymbol{\varphi}_{k}^{H}\!\boldsymbol{\varphi}_{k^{\prime}}\!\!-\!(\!1\!+\!\xi_{r}\!\!-\!\xi_{r}\xi_{t}\!)\!\Big)\!\Lambda_{kk^{\prime}}\!\!\Bigg]\! +\! \mathcal{S}_{kk^{\prime}},\\
        L_{k}&=\sum\limits_{m=1}^{M}\!\left(1+\left[2^{\frac{T}{T-\tau}C_{d,m}}-1\right]^{-1}\right)\!N\gamma_{mk}.
 \end{aligned}
 \end{equation}
\end{theorem}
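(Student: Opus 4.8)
The plan is to recast the NP-hard sum-rate problem $\mathcal{P}_1$ as a standard geometric program by (i) turning the objective into a monomial, (ii) folding the fronthaul rate--distortion relation into $\text{SINR}_k^{CFE}$ so that it becomes a ratio of posynomials in $\{\eta_k\}$, and (iii) introducing the slack variables $t_k$ to linearize that ratio.

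First, because $\log$ is increasing and $\frac{T-\tau}{T}>0$, maximizing $\sum_k\log(1+\text{SINR}_k)$ is equivalent to maximizing $\prod_k(1+\text{SINR}_k)$. Working in the high-SINR regime where $\log(1+x)\approx\log x$ --- the approximation that lets a single GP replace the exact problem --- I treat the objective as $\prod_k\text{SINR}_k$, whose reciprocal $\prod_k\text{SINR}_k^{-1}$ is a monomial; this is what the GP objective $\max\prod_k t_k$ (equivalently $\min\prod_k t_k^{-1}$) will encode once the $t_k$ are introduced.

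Second --- the step I expect to carry the weight --- I must express $\text{SINR}_k^{CFE}$ of Theorem~1 explicitly in $\{\eta_k\}$. The subtlety is that the data-quantization noise $Q_{d,m}$ buried in $\mathcal{E}_k$ is itself power-dependent through the fronthaul budget: solving (\ref{DBHCFE}) for a fixed split $C_{d,m}$ gives $Q_{d,m}=(\rho_u\sum_{k'}\eta_{k'}\beta_{mk'}+N)/(2^{\frac{T}{T-\tau}C_{d,m}}-1)$. Substituting this into $\mathcal{E}_k=\sum_m(N+Q_{d,m})\gamma_{mk}$ and separating the $\eta$-free from the $\eta$-linear contributions yields $\mathcal{E}_k=L_k+\sum_{k'}\eta_{k'}\mathcal{S}_{kk'}$, with precisely the $L_k$ and $\mathcal{S}_{kk'}$ stated in (\ref{param}). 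Absorbing $\mathcal{S}_{kk'}$ into the interference coefficient then leaves $\text{SINR}_k^{CFE}=A_k\eta_k/(\sum_{k'}\eta_{k'}B_{kk'}+L_k)$, a monomial numerator over a posynomial denominator.

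Third, I introduce $t_k\ge0$ as a lower bound on $\text{SINR}_k$; clearing the denominator and dividing through by the monomial $A_k\eta_k$ converts each bound into a posynomial-$\le$-$1$ inequality, namely the first constraint of $\mathcal{P}_2$, while each box constraint $\eta_k\le1$ is already a monomial bounded by one. Together with the monomial objective this is a valid GP. Finally I argue tightness: at the optimum every $t_k$-constraint is active, since otherwise $t_k$ and hence $\prod_k t_k$ could be raised, so the GP solution attains the (high-SINR approximated) sum rate. The main obstacle is entirely in the second step --- eliminating $Q_{d,m}$ through the fronthaul constraint and verifying that the regrouped terms match $B_{kk'}$, $L_k$ and $\mathcal{S}_{kk'}$ term by term; once $\text{SINR}_k^{CFE}$ is in posynomial-ratio form, the passage to a GP is routine, with the only modeling approximation being the high-SINR step needed to render the exact (NP-hard) objective as one GP.
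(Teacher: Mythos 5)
Your proposal is correct and follows essentially the same route as the paper's proof: high-SINR lower-bounding of the objective by $\prod_{k}\mathrm{SINR}_{k}$, introduction of slack variables $t_{k}\leq \mathrm{SINR}_{k}$ (the paper's intermediate problem $\mathcal{P}_{3}$), and rewriting $\mathrm{SINR}_{k}^{CFE}$ as $A_{k}\eta_{k}/\bigl(\sum_{k^{\prime}}\eta_{k^{\prime}}B_{kk^{\prime}}+L_{k}\bigr)$, a monomial over a posynomial in $\{\eta_{k}\}$. Your second step --- solving (\ref{DBHCFE}) for $Q_{d,m}=\bigl(\rho_{u}\sum_{k^{\prime}}\eta_{k^{\prime}}\beta_{mk^{\prime}}+N\bigr)\bigl(2^{\frac{T}{T-\tau}C_{d,m}}-1\bigr)^{-1}$ and regrouping $\mathcal{E}_{k}=L_{k}+\sum_{k^{\prime}}\eta_{k^{\prime}}\mathcal{S}_{kk^{\prime}}$ --- correctly supplies, term by term, the algebra the paper compresses into ``can be rewritten as follows equivalently,'' so it is the same argument carried out in more explicit detail.
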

\begin{proof}
  By high SNR approximation, the objective in $\mathcal{P}_{1}$ can be lower-bounded by $\frac{T\!-\!\tau}{T}\!\log\!\!\Big[\!\prod\limits_{k=1}^{K}\!\!\text{SINR}_{k}\!\Big]\!$. Since $\log(x)$ is monotone increasing function of $x$ and $\frac{T-\tau}{T}$ is constant, the objective function can be rewritten as $\prod\limits_{k=1}^{K}\text{SINR}_{k}$. Moreover, by introducing auxiliary variables $t_{k}$, where $\text{SINR}_{k}\!\geq\! t_{k}$, $\mathcal{P}_{1}$ is approximated by $\mathcal{P}_{3}$,
   \begin{equation*}\label{power3}
  \mathcal{P}_{3}:
    \begin{cases}
      \begin{aligned}
        &\underset{\{\eta_{k}\geq0,\ t_{k}\geq0\}_{k}}{\text{{maximize}}}
                             && \prod\limits_{k=1}^{K}t_{k}\\
        &\text{{subject to}} && t_{k}\leq \text{SINR}_{k},\\
        &                    && \eta_{k}\leq 1, \ k = 1,2,...,K
      \end{aligned}
    \end{cases}
 \end{equation*}
 Besides, from Theorem 1, $\text{SINR}_{k}$ for CFE can be rewritten as follows equivalently,
 \begin{equation*}\label{RewrittenCFE}
    \text{SINR}_{k}^{CFE} = \frac{A_{k}\eta_{k}}{\sum\limits_{k^{\prime}=1}^{K}\eta_{k^{\prime}}B_{kk^{\prime}} + L_{k}},
 \end{equation*}
 where, $A_{k}$, $B_{kk^{\prime}}$, and $L_{k}$ are given in (\ref{param}). By further algebraic manipulations $\mathcal{P}_{2}$ could be derived. Therefore, it can be solved efficiently using available solvers such as MOSEK in CVX.
\end{proof}
\setcounter{remark}{4}
\begin{remark}
  By following similar steps as developed for CFE strategy, one can obtain the approximate GP for maximizing the SSE of the ECF given in Theorems 2 and 3.
\end{remark}
\section{Numerical Results}\label{sec5numerical}
Here, numerical results are presented for the three strategies to illustrate the effects of the limited capacity FHLs and hardware non-idealities. To this end, a square area with side $D=1\text{ [km]}$ is considered wherein $K$ UEs and $M$ APs are uniformly and randomly distributed. To avoid boundary effects, the area is  wrapped around.

For the large-scale fading, the following three-slope model is considered similar to \cite{ngo2017cell}
\begin{equation}\label{LSFC}
\begin{split}
  \!\beta_{mk} \ \ \!\! =& PL_{mk} + \sigma_{\text{sh}}z_{mk}\\
 \!P\!L_{mk}\!\!=&
  \begin{cases}
    -L-10\log_{10}\left(d_{mk}^{3.5}\right),& \text{if } d_{mk}\geq d_{1}\\
    -L-10\log_{10}\left(d_{1}^{1.5}d_{mk}^{2}\right),& \text{if } d_{0}< d_{mk} \leq d_{1}\\
    -L-10\log_{10}\left(d_{1}^{1.5}d_{0}^{2}\right),& \text{if } d_{mk} \leq d_{0}
  \end{cases}\\
  \!L\!\! \ \ \ \ \ =& 46.3 + 33.9\log_{10}(f)-13.82\log_{10}(h_{\text{AP}})-(1.1 \log_{10}(f)\!-\!0.7)h_u +
      (1.56 \log_{10}(f)\!-\!0.8),
  \end{split}
\end{equation}
where, $\sigma_{\text{sh}} = 8 \ [\text{dB}]$, $z_{mk}\sim\mathcal{CN}(0,1)$ account for the shadowing, and $P\!L_{mk}$ represents path-loss in [dB]. It is assumed that $h_{\text{AP}} = 15\ [\text{m}], \ h_u = 1.65\ [\text{m}], \ f = 1.9\ [\text{GHz}]$ which are the APs height, user antenna height and carrier frequency, respectively, moreover, $d_{0} = 10\ [\text{m}], \ d_{1} = 50\ [\text{m}]$. We also assume that $T = 200$ samples, $\rho_u = \rho_p = 100\ [\text{mW}]$, and for the noise power, we have
\begin{equation}\label{noise}
  N = B\times k_{B}\times T_{0}\times NF,
\end{equation}

where, $B = 20\ [\text{MHz}]$, $k_{B} = 1.381\times 10^{-23}\ [\text{Joule/Kelvin}]$, $T_{0} = 290\ [\text{Kelvin}]$ and $N\!F = 9\ [\text{dB}]$ are system bandwidth, Boltzmann constant, temperature and noise figure, respectively.
\begin{figure}[!t]
    \centering
        \psfrag{SSE}[][][0.65]{SSE [bits/s/Hz]}
        \psfrag{C}[][][0.65]{$C$ [bits/s/Hz]}
        \psfrag{EMCF}[][][0.6]{ \ \ \ \ \ \ \ \ \ \ \ \ \ \ EMCF: No markers}
        \psfrag{ECFUB}[][][0.6]{ \ \ ECF\textsubscript{UB}: \textcolor[rgb]{0.1992, 0.3984, 0.996}{\scriptsize{$\bigstar$}}}
        \psfrag{CFE}[][][0.6]{ \ \ \!\!\! CFE: \textcolor[rgb]{0.25, 0.25, 0.25}{\scriptsize{\CIRCLE}}}
        \psfrag{EBH}[][][0.6]{ \ \ \ \ \ \ \ \ \ \ \ \ \ \ \ Equal fronthaul: \textcolor[rgb]{0.91,0.59,0.12}{{$\blacktriangle$}}}
        \psfrag{EMCF1}[][][0.6]{\ \ EMCF}
        \psfrag{ECFUB1}[][][0.6]{\ ECF\textsubscript{UB}}
        \psfrag{0.1}[][][0.6]{$0.1$}
        \psfrag{0.2}[][][0.6]{$0.2$}
        \psfrag{0.3}[][][0.6]{$0.3$}
        \psfrag{2}[][][0.6]{$2$}
        \psfrag{6}[][][0.6]{$6$}
        \psfrag{0}[][][0.6]{$0$}
        \psfrag{5}[][][0.6]{$5$}
        \psfrag{10}[][][0.6]{$10$}
        \psfrag{15}[][][0.6]{$15$}
        \psfrag{20}[][][0.6]{$20$}
        \psfrag{25}[][][0.6]{$25$}
        \psfrag{30}[][][0.6]{$30$}
        \psfrag{35}[][][0.6]{$35$}
        \psfrag{-1}[][][0.5]{$-1$}
        \psfrag{1}[][][0.5]{$1$}
        \includegraphics[scale=.4]{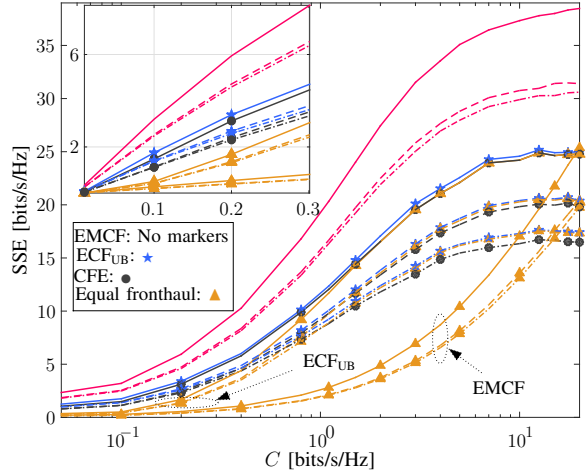}
        \vspace{-0.3cm}
        \caption{{Sum spectral efficiency versus fronthaul capacity for $K=20$ and $M=200$. Solid, dash and dash-dot lines represent perfect hardware, $\{\xi_{r} = 0.8,\ \xi_{t} = 1\}$ and $\{\xi_{r} = 1,\ \xi_{t} = 0.8\}$, respectively.}}\label{fig:SRvsCt}
\vspace{-0.2cm}
\end{figure}%

Fig.~\ref{fig:SRvsCt} studies SSE versus fronthaul capacity, for different  transceiver hardware qualities, and various strategies. With the proposed low-complexity fronthaul allocation, EMCF outperforms both the ECF and CFE strategies. It is worth noting that proposed fronthaul capacity allocation for EMCF substantially improves the SSE, however for ECF the proposed allocation becomes highly advantageous when fronthaul capacity is limited; For instance, it provides 50\% increase in SSE (for all three cases of different transceiver hardware qualities) at $C=0.2[\text{bits/s/Hz}]$ compared to equal fronthaul capacity allocation. Furthermore,  for all strategies, hardware qualities at the UEs are more influential than that of the APs'.
\begin{figure}[!t]
    \centering
        \psfrag{ratio}[][][0.64]{$\frac{\text{SSE\textsubscript{UB}-SSE\textsubscript{LB}}}{\text{SSE\textsubscript{UB}}}$}
        \psfrag{Ztttttttttt}[][][0.64]{$\xi_{t}$}
        \psfrag{Zrrrrrrrrrr}[][][0.64]{$\xi_{r}$}
        \psfrag{0}[][][0.54]{$0$}
        \psfrag{0.9}[][][0.54]{$0.9$}
        \psfrag{0.8}[][][0.54]{$0.8$}
        \psfrag{0.7}[][][0.54]{$0.7$}
        \psfrag{0.6}[][][0.54]{$0.6$}
        \psfrag{0.5}[][][0.54]{$0.5$}
        \psfrag{1}[][][0.54]{$1$}
        \psfrag{0.06}[][][0.54]{$0.06$}
        \psfrag{0.04}[][][0.54]{$0.04$}
        \psfrag{0.02}[][][0.54]{$0.02$}
        \includegraphics[scale=.37]{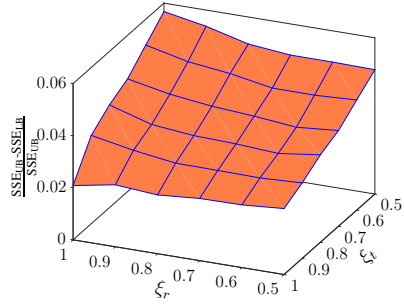}
        \vspace{-0.3cm}
        \caption{{Difference of lower and upper bounds of SSE for ECF with $C=1$ [bits/s/Hz]}}\label{fig:UBLB}
\vspace{-0.2cm}
\end{figure}%

Fig.~\ref{fig:UBLB} represents the difference between upper and lower bounds of SSE with ECF for $C=1$ [bits/s/Hz]. For fairly high quality hardwares, e.g. $0.9\!\leq \!\xi_{t}, \ \xi_{r}\!\leq\! 1$, the maximum SSE difference between the bounds is 3\% which indicates the tightness of the bounds, the difference is even lower for higher fronthaul capacities.

\begin{figure}[t!]
  \centering
        \psfrag{Cp}[][][0.7]{C\textsubscript{$p$} [bits/s/Hz]}
        \psfrag{SSE}[][][0.7]{SSE [bits/s/Hz]}
        \psfrag{WCFxrt1wwwwwwww}[][][0.7]{\!\!\!\!\!\!EMCF, $\xi_{t}=\xi_{r}=1$}
        \psfrag{WCFxirt09wwww}[][][0.7]{\ \ \ \ \ EMCF, $\xi_{t}=\xi_{r}=0.9$}
        \psfrag{ECFUBxirt1wwwwwww}[][][0.7]{\!\!\!\!\!\!\!\!ECF\textsubscript{UB}, $\xi_{t}=\xi_{r}=1$}
        \psfrag{ECFUBxirt09wwwwww}[][][0.7]{ECF\textsubscript{UB}, $\xi_{t}=\xi_{r}=0.9$}
        \psfrag{CFExirt1wwwwwwww}[][][0.7]{\!\!\!\!\!\!\!CFE, $\xi_{t}=\xi_{r}=1$}
        \psfrag{CFExirt09wwwwwwww}[][][0.7]{\!\!\!\!\!\!CFE, $\xi_{t}=\xi_{r}=0.9$}
        \psfrag{C01}[][][0.65]{$\ \ \ \ C=0.1$ [bits/s/Hz]}
        \psfrag{C1}[][][0.65]{$\ \ \ \ \ \ C=1$ [bits/s/Hz]}
        \psfrag{0}[][][0.62]{$0$}
        \psfrag{0.5}[][][0.62]{$0.5$}
        \psfrag{1.5}[][][0.62]{$1.5$}
        \psfrag{2.5}[][][0.62]{$2.5$}
        \psfrag{3.5}[][][0.62]{$3.5$}
        \psfrag{1}[][][0.26]{$1$}
        \psfrag{2}[][][0.62]{$2$}
        \psfrag{3}[][][0.62]{$3$}
        \psfrag{4}[][][0.62]{$4$}
        \psfrag{5}[][][0.62]{$5$}
        \psfrag{10}[][][0.62]{$10$}
        \psfrag{15}[][][0.62]{$15$}
        \psfrag{20}[][][0.62]{$20$}
        \psfrag{-1}[][][0.53]{$\ -1$}
        \psfrag{-2}[][][0.53]{$\ -2$}
        \psfrag{-3}[][][0.53]{$\ -3$}
        \psfrag{-4}[][][0.53]{$\ -4$}
        \psfrag{-5}[][][0.53]{$\ -5$}
  \includegraphics[scale=.52]{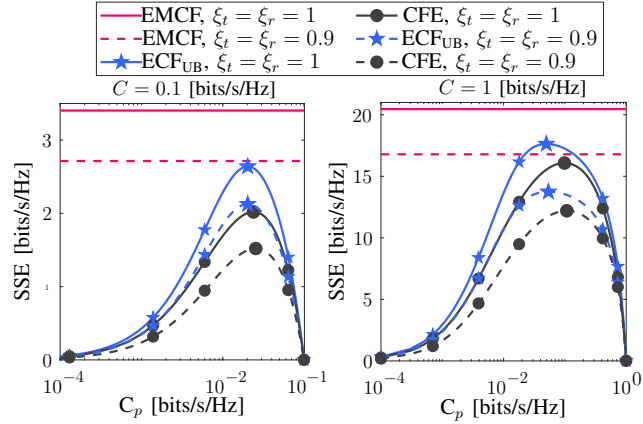}
    \vspace{-0.3cm}
  \caption{Sum spectral efficiency versus allocated fronthaul capacity for pilot transmission.}\label{fig:SSEvsCp2}
    \vspace{-0.2cm}
\end{figure}

\begin{figure}[t!]
  \centering
        \psfrag{SNR}[][][0.72]{SNR [dB]}
        \psfrag{AAAAAAAAAAAP1}[][][0.72]{\!\!\!AP,\ $\xi_r=\xi_t=1$}
        \psfrag{AAAAAAAAAAAP3}[][][0.72]{\!\!AP,\ $\xi_r=\xi_t=0.3$}
        \psfrag{AAAAAAAAAAAP7}[][][0.72]{AP,\ $\xi_r=\xi_t=0.7$}
        \psfrag{CCCCCCCCCCCU1}[][][0.72]{CU,\ $\xi_r=\xi_t=1$}
        \psfrag{CCCCCCCCCCCU3}[][][0.72]{CU,\ $\xi_r=\xi_t=0.3$}
        \psfrag{CCCCCCCCCCCU7}[][][0.72]{CU,\ $\xi_r=\xi_t=0.7$}
        \psfrag{HighSNR}[][][0.65]{High SNR}
        \psfrag{Approximation}[][][0.65]{approximation}
        \psfrag{CCCCCCCCCCCU7}[][][0.72]{CU,\ $\xi_r=\xi_t=0.7$}
        \psfrag{C01}[][][0.62]{$\ \ \ \ \ \ \ \ \ \ \ \ \text{C}=0.1$ [bits/s/Hz]}
        \psfrag{C1}[][][0.62]{$\ \ \ \ \text{C}=1$ [bits/s/Hz]}
        \psfrag{RelativeMSE}[][][0.7]{Relative MSE}
        \psfrag{0}[][][0.62]{$0$}
        \psfrag{1}[][][0.62]{$1$}
        \psfrag{20}[][][0.62]{$20$}
        \psfrag{40}[][][0.62]{$40$}
        \psfrag{-20}[][][0.62]{$-20$}
        \psfrag{10}[][][0.62]{$10$}
        \psfrag{2}[][][0.62]{$2$}
        \psfrag{3}[][][0.62]{$3$}
        \psfrag{4}[][][0.62]{$4$}
        \psfrag{5}[][][0.62]{$5$}
        \psfrag{-3}[][][0.53]{$-3$}
        \psfrag{-4}[][][0.53]{$-4$}
  \includegraphics[scale=.52]{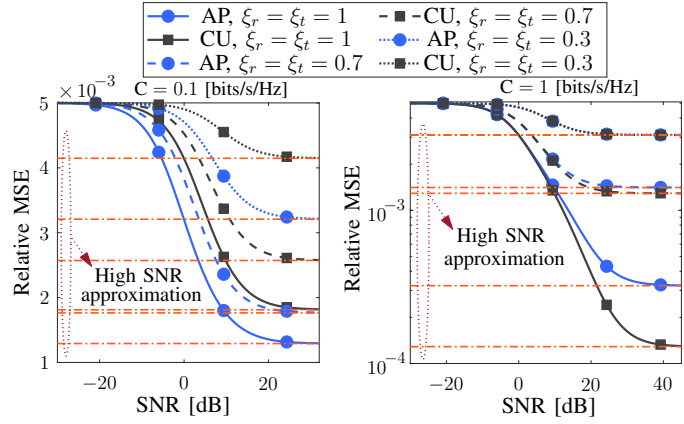}
    \vspace{-0.3cm}
  \caption{Ultimate channel estimation error at CU versus SNR for two schemes of the estimation at AP and CU with  different HIs.}\label{fig:MSEvsSNR}
    \vspace{-0.2cm}
\end{figure}

Fig.~\ref{fig:SSEvsCp2} shows the SSE versus the allocated fronthaul capacity for pilot transmission, i.e. $C_p$, for $C_{m}=0.1$ and $C_{m}=1$. This figure pinpoints the fact that although allocating large portion of the fronthaul capacity for CSI transmission improves the CSI estimation quality, it causes more distortion on the quantized data signals and results in achievable rate degradation. While for small values of $C_p$, inferior channel estimation quality is the main reason for performance degradation.

Fig.~\ref{fig:MSEvsSNR} compares the channel estimation error when the channel estimation is carried out at the CU or AP for $C_{m} = 0.1$ and $C_{m}=1$. We define relative MSE per AP as $\frac{\beta - \mathbb{E}\{|\hat{g}|^2\}}{{\text{mean}(\beta)}}$ and $\text{SNR} = 10\log\left(\frac{\rho}{\sigma^2}\text{mean}(\beta)\right)$. As the figure shows, for lower fronthaul capacities, performing the channel estimation at the AP could lead to smaller estimation errors while for higher fronthaul capacities it is favorable to conduct channel estimation at the CU. Also, hardware impairments and fronthaul capacities are the limiting factor at high SNR regime, such that the estimation error at high SNR regime becomes non-zero for finite fronthaul capacities and imperfect hardwares.
\begin{figure}[t!]
  \centering
        \psfrag{Cd}[][][0.75]{$C_{d}^{*}$ [bits/s/Hz]}
        \psfrag{HI}[][][0.8]{$\xi_r = \xi_t$}
        \psfrag{ECFUBB}[][][0.6]{ECF\textsubscript{UB}}
        \psfrag{ECFLBB}[][][0.6]{ECF\textsubscript{LB}}
        \psfrag{CFE}[][][0.6]{\ \ \ CFE}
        \psfrag{Ct1}[][][0.6]{C\textsubscript{$m$} $\!=\! 1\!$ [bits/s/Hz]}
        \psfrag{0.2}[][][0.7]{$0.2$}
        \psfrag{0.4}[][][0.7]{$0.4$}
        \psfrag{0.6}[][][0.7]{$0.6$}
        \psfrag{0}[][][0.7]{$0$}
        \psfrag{0.8}[][][0.7]{$0.8$}
        \psfrag{0.7}[][][0.7]{$0.7$}
        \psfrag{0.75}[][][0.7]{$0.75$}
        \psfrag{0.85}[][][0.7]{$0.85$}
        \psfrag{1}[][][0.7]{$1$}
        \psfrag{0.9}[][][0.7]{$0.9$}
        \psfrag{0.95}[][][0.7]{$0.95$}
  \includegraphics[scale=.46]{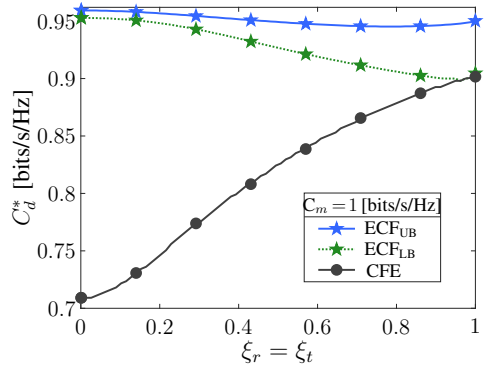}
    \vspace{-0.3cm}
  \caption{Optimum allocated fronthaul capacity for data transmission versus HIs.}\label{fig:CdvsHI}
    \vspace{-0.2cm}
\end{figure}

In Fig.~\ref{fig:CdvsHI} the optimal capacity for data transmission, $C_{d}^{*}$, versus transceiver HIs for $C_{m}=1$ [bits/s/Hz] is depicted. For CFE, by improving the hardware qualities more fronthaul capacity is dedicated for data transmission than CSI transmission, while for the ECF, $C_{d}^{*}$ does not change substantially by varying hardware qualities.

\begin{figure}[t!]
  \centering
        \psfrag{CDF}[][][0.77]{CDF}
        \psfrag{PUSE}[][][0.7]{Per User SE}
        \psfrag{SSE}[][][0.7]{SSE}
        \psfrag{SC1AAAAAAAAAAAAAAAA}[][][0.65]{\!\!\!\!\!\!\!\!\!\!\!\!\!\!\!\!\!\!EMCF,\ $\xi_r=\xi_t=1$}
        \psfrag{ECFUBPC1AAAAAAAAAAAA}[][][0.65]{\!\!\!\!\!\!\!\!\!\!\!\!ECF\textsubscript{UB,Opt},\ $\xi_r=\xi_t=1$}
        \psfrag{CFEPC1AAAAAAAAAAAAAAA}[][][0.65]{\!\!\!\!\!\!\!\!\!\!\!\!\!\!\!\!\!\!\!\!\!CFE\textsubscript{Opt},\ $\xi_r=\xi_t=1$}
        \psfrag{SC09AAAAAAAAAAAAAAAAA}[][][0.65]{\!\!\!\!\!\!\!\!\!\!\!\!\!\!\!\!\!\!EMCF,\ $\xi_r=\xi_t=0.9$}
        \psfrag{ECFUBPC09AAAAAAAAAAAAA}[][][0.65]{\!\!\!\!\!\!\!\!\!\!\!ECF\textsubscript{UB,Opt},\ $\xi_r=\xi_t=0.9$}
        \psfrag{ECFUB1AAAAAAAAAAAA}[][][0.65]{\!\!\!\!\!\!\!\!\!ECF\textsubscript{UB},\ $\xi_r=\xi_t=1$}
        \psfrag{CFE1AAAAAAAAAAAAAA}[][][0.65]{\!\!\!\!\!\!\!\!\!\!\!\!\!\!\!CFE,\ $\xi_r=\xi_t=1$}
        \psfrag{CFEUB09AAAAAAAAAAAA}[][][0.65]{\!\!\!\!\!\!\!ECF\textsubscript{UB},\ $\xi_r=\xi_t=0.9$}
        \psfrag{CFEPC09AAAAAAAAAAAA}[][][0.65]{\!\!\!\!\!\!\!\!\!\! CFE\textsubscript{Opt},\ $\xi_r\!=\!\xi_t\!=\!0.9$}
        \psfrag{CFE09AAAAAAAAAAAAAA}[][][0.65]{\!\!\!\!\!\!\!\!\!\!\!\!CFE,\ $\xi_r=\xi_t=0.9$}
        \psfrag{0}[][][0.62]{$0$}
        \psfrag{5}[][][0.62]{$5$}
        \psfrag{10}[][][0.62]{$10$}
        \psfrag{15}[][][0.62]{$15$}
        \psfrag{20}[][][0.62]{$20$}
        \psfrag{0.5}[][][0.62]{$0.5$}
        \psfrag{1}[][][0.62]{$1$}
        \psfrag{1.5}[][][0.62]{$1.5$}
        \psfrag{2}[][][0.62]{$2$}
        \psfrag{0.2}[][][0.62]{$0.2$}
        \psfrag{0.4}[][][0.62]{$0.4$}
        \psfrag{0.6}[][][0.62]{$0.6$}
        \psfrag{0.8}[][][0.62]{$0.8$}
  \includegraphics[scale=.5]{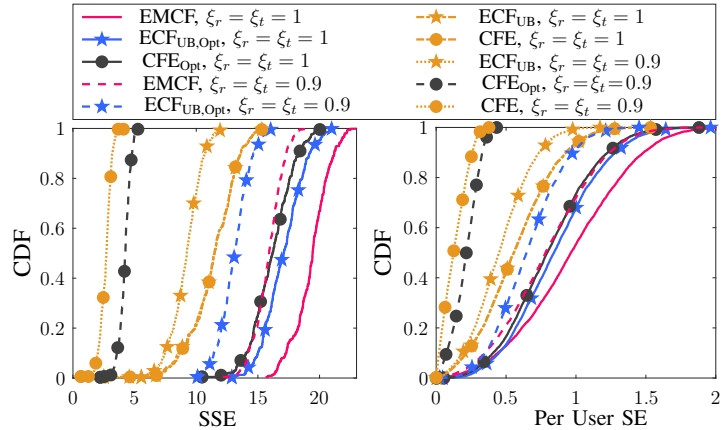}
    \vspace{-0.3cm}
  \caption{CDF of SSE and per UE's SE for $M=200$, $K=20$, and $C_{m}=1$ [bits/s/Hz].}\label{fig:puSESSE}
    \vspace{-0.2cm}
\end{figure}

Fig.~\ref{fig:puSESSE} addresses the cumulative distribution function (CDF) for per user SE and SSE. This figure highlights the performance improvements by optimizing the power and fronhaul allocation compared to full power and equal fronthaul allocation for CFE and ECF. As it is illustrated, optimizing $C_{p}$ and $C_{d}$ along with power allocation improves the SSE of the system between $60\%$ to $90\%$ for the $5\%$-outage sum rate, in particular. It also improves the performance of each user significantly.

Fig.~\ref{fig:EEvsSE} depicts EE as a function of SSE for $C_m = 1$ [bits/s/Hz], $K=20$, $M=100$, and two set of HIs; ${\xi_{r}=\xi_{t}=1}$ and ${\xi_{r}=\xi_{t}=0.9}$. The EE of the system is given by
\begin{equation}\label{EE}
  EE = \frac{B\sum_{k=1}^{K}R_{k}}{P_{t}},
\end{equation}
where $P_{t}$ indicates the total power consumption, such that,
\begin{equation}\label{PC}
  P_{t} = \sum\limits_{k=1}^{K}P_{k} + \sum\limits_{m=1}^{M}P_{m} + B\sum\limits_{m=1}^{M}C_{m}P_{bh,m},
\end{equation}
where $P_{m}=0.2$ [W] and $P_{bh,m}=0.25$ [Watt/Gbit/s] denote power consumption by the fronthaul link and AP\textsubscript{$m$}, respectively, and $P_{k}$  represents the power consumption by UE\textsubscript{$k$}. In either cases, EMCF in comparison with other two strategies can improve the EE of the system substantially.
\begin{figure}[!t]
\vspace{-0.4cm}
\centering
    \begin{minipage}[b]{.45\textwidth}
    \centering
        \psfrag{EE}[][][0.65]{EE [Mbits/J]}
        \psfrag{SSE}[][][0.65]{SSE [bits/s/Hz]}
        \psfrag{zrt1}[][][0.65]{$\xi_{t}=\xi_{r}=1$}
        \psfrag{zrt09}[][][0.65]{$\xi_{t}=\xi_{r}=0.9$}
        \psfrag{EMCF}[][][0.53]{ \ \ \ \ \ \ \ \ \ \ \ \ \ \ \ \ EMCF: No markers}
        \psfrag{ECFUB}[][][0.53]{ \ \ \ ECF\textsubscript{UB}: \textcolor[rgb]{0.1992, 0.3984, 0.996}{\scriptsize{$\bigstar$}}}
        \psfrag{CFE}[][][0.53]{ \ \ \! CFE: \textcolor[rgb]{0.25, 0.25, 0.25}{\scriptsize{\CIRCLE}}}
        \psfrag{5}[][][0.6]{$5$}
        \psfrag{10}[][][0.6]{$10$}
        \psfrag{15}[][][0.6]{$15$}
        \psfrag{1}[][][0.56]{$1$}
        \psfrag{0}[][][0.56]{$0$}
        \psfrag{-1}[][][0.56]{$-1$}
  \includegraphics[scale=.32]{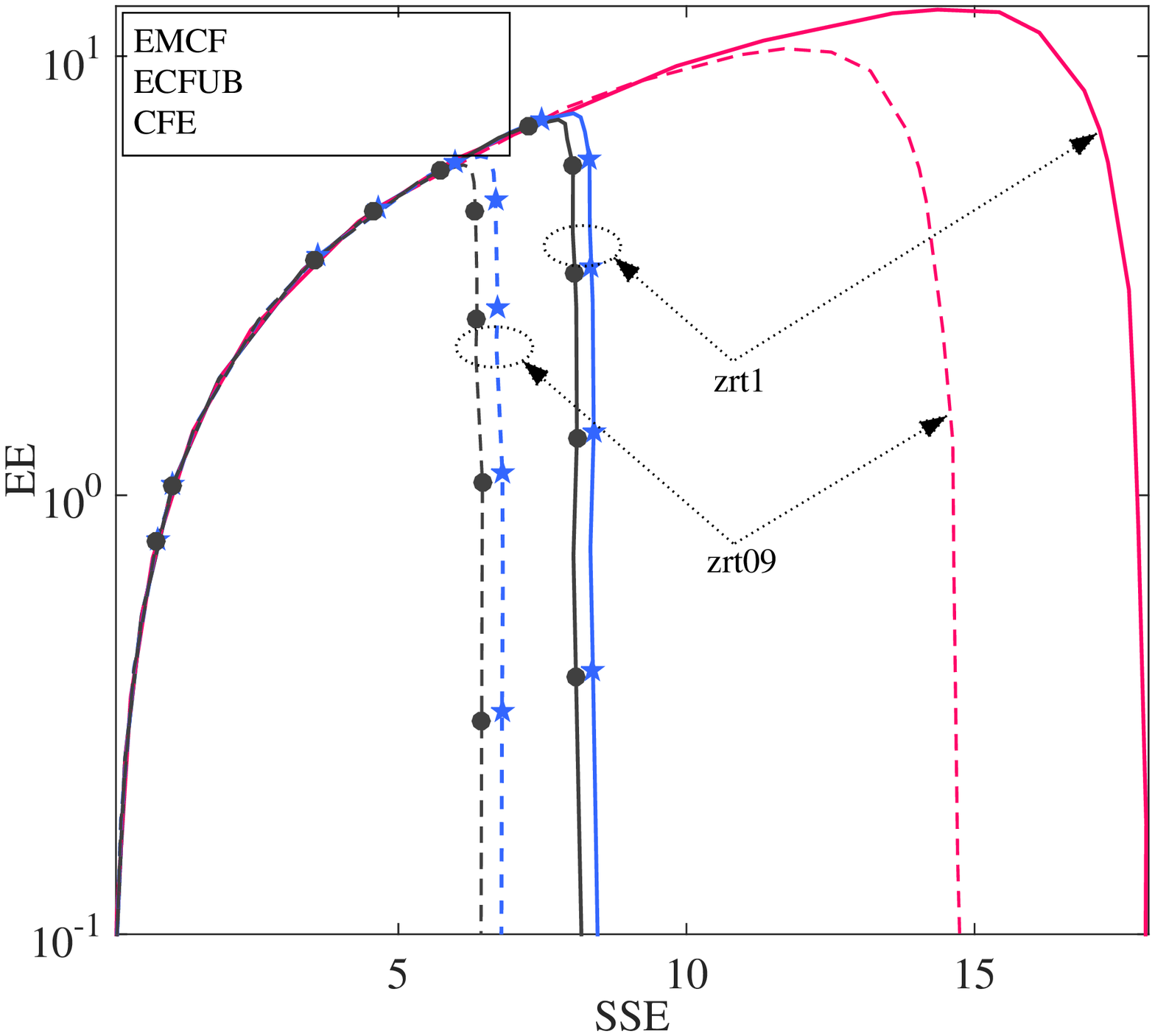}
    \vspace{-0.3cm}
  \caption{Energy efficiency versus SSE.}\label{fig:EEvsSE}
  \vspace{-0.2cm}
\end{minipage}%
\hfill
\begin{minipage}[b]{.45\textwidth}
\vspace{-2cm}
  \centering
        \psfrag{EE}[][][0.65]{EE [Mbits/J]}
        \psfrag{C}[][][0.65]{$C$ [bits/s/Hz]}
        \psfrag{zrt1}[][][0.65]{$\xi_{t}=\xi_{r}=1$}
        \psfrag{zrt09}[][][0.65]{$\xi_{t}=\xi_{r}=0.9$}
        \psfrag{EMCF}[][][0.53]{ \ \ \ \ \ \ \ \ \ \ \ \ \ \ \ \ EMCF: No markers}
        \psfrag{ECFUB}[][][0.53]{ \ \ \ ECF\textsubscript{UB}: \textcolor[rgb]{0.1992, 0.3984, 0.996}{\scriptsize{$\bigstar$}}}
        \psfrag{CFE}[][][0.53]{ \ \ \! CFE: \textcolor[rgb]{0.25, 0.25, 0.25}{\scriptsize{\CIRCLE}}}
        \psfrag{2}[][][0.58]{$2$}
        \psfrag{4}[][][0.6]{$4$}
        \psfrag{6}[][][0.6]{$6$}
        \psfrag{8}[][][0.6]{$8$}
        \psfrag{10}[][][0.6]{$10$}
        \psfrag{12}[][][0.6]{$12$}
        \psfrag{14}[][][0.6]{$14$}
        \psfrag{16}[][][0.6]{$16$}
        \psfrag{-1}[][][0.56]{$-1$}
        \psfrag{-2}[][][0.56]{$-2$}
        \psfrag{0}[][][0.57]{$0$}
        \psfrag{1}[][][0.57]{$1$}
  \includegraphics[scale=.32]{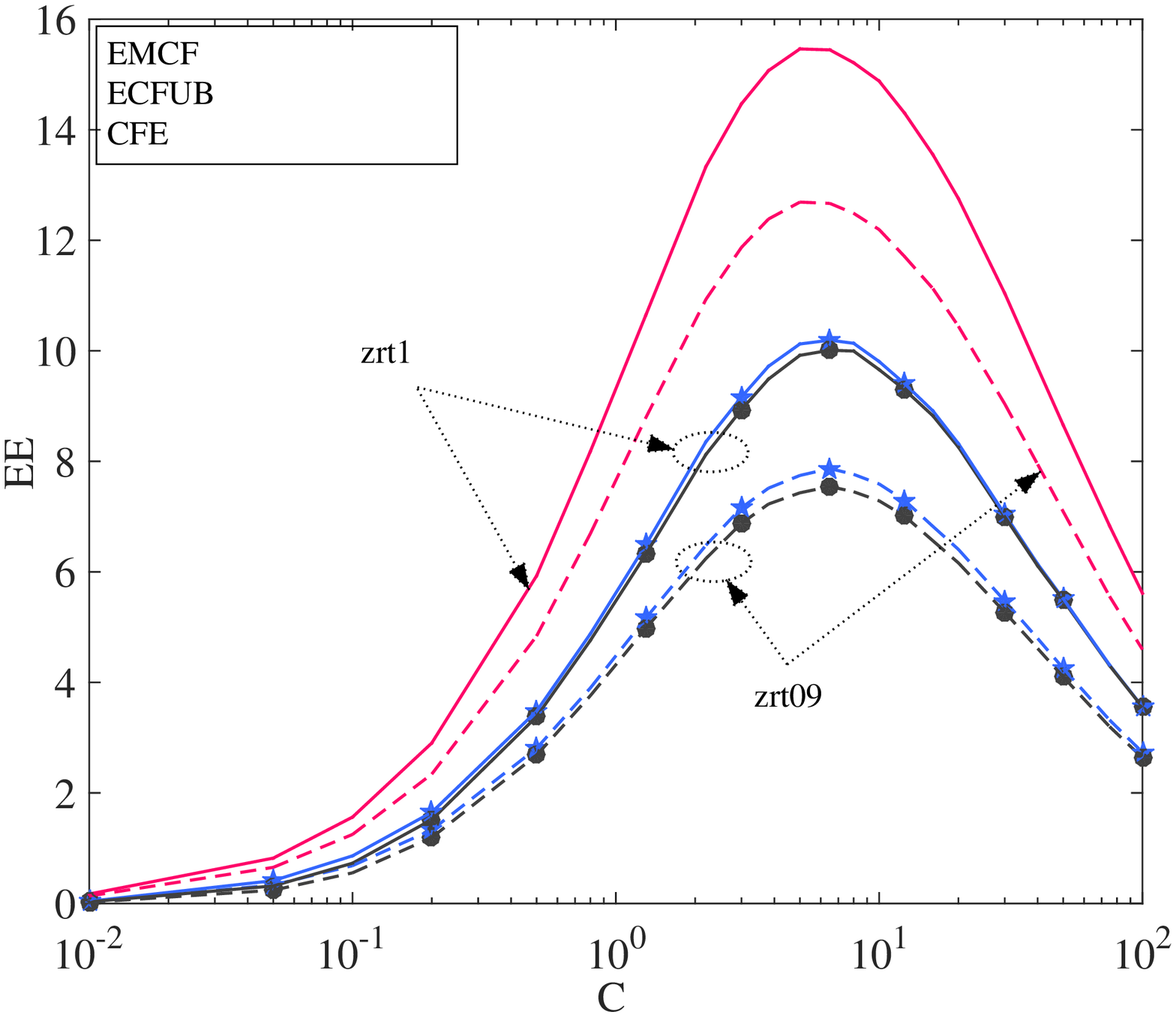}
    \vspace{-0.3cm}
  \caption{Energy efficiency versus fronthaul capacity.}\label{fig:EEvsC}
    \vspace{-0.2cm}
     \end{minipage}
    \vspace{-0.7cm}
\end{figure}



Fig.~\ref{fig:EEvsC} illustrates the EE as a function of fronthaul capacity for $K=20$ and $M=200$. For the strictly limited capacity fronthaul links, the lower data rate due to the limited fronthaul capacity is the major reason for low EE. On the other hand, as the fronthaul capacity increases the consumed power also increases which reduces the EE.

\begin{figure}[!t]
\centering
        \psfrag{SSE}[][][0.7]{SSE [bits/s/Hz]}
        \psfrag{C}[][][0.7]{$C$ [bits/s/Hz]}
        \psfrag{NoAP}[][][0.7]{Number of APs}
        \psfrag{EMCFxirt1}[][][0.61]{\ \ \ \ \ \ \ \ EMCF $\!\xi_{r}\!=\!\xi_{t}\!=\!1\!$}
        \psfrag{ECFUBxirt1}[][][0.61]{\!\!\! ECF\textsubscript{UB} $\!\xi_{r}\!=\!\xi_{t}\!=\!1\!$}
        \psfrag{ECFxirt1}[][][0.61]{\ \ \ \ \ \ \ \ \ CFE $\!\xi_{r}\!=\!\xi_{t}\!=\!1\!$}
        \psfrag{EMCFxirt09}[][][0.61]{EMCF $\!\xi_{r}\!=\!\xi_{t}\!=\!0.9\!$}
        \psfrag{ECFUBxirt09}[][][0.61]{\ \ \ ECF\textsubscript{UB} $\!\xi_{r}\!=\!\xi_{t}\!=\!0.9\!$}
        \psfrag{CFExirt09}[][][0.61]{\ CFE $\!\xi_{r}\!=\!\xi_{t}\!=\!0.9\!$}
        \psfrag{0}[][][0.65]{$0$}
        \psfrag{20}[][][0.65]{$20$}
        \psfrag{40}[][][0.65]{$40$}
        \psfrag{60}[][][0.65]{$60$}
        \psfrag{80}[][][0.65]{$80$}
        \psfrag{25}[][][0.65]{$25$}
        \psfrag{5}[][][0.65]{$5$}
        \psfrag{10}[][][0.65]{$10$}
        \psfrag{15}[][][0.65]{$15$}
        \subfloat[]{\includegraphics[scale=0.34]{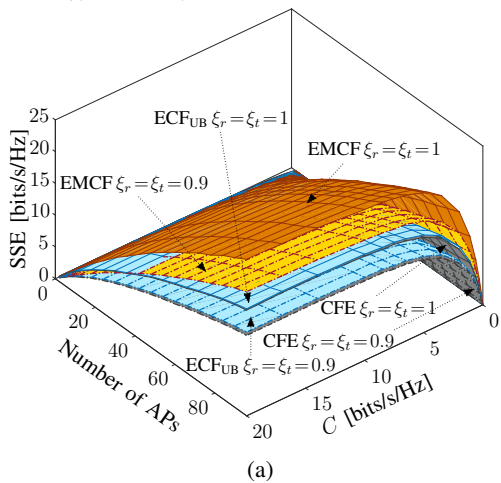}
\label{fig:SEvsMCt}}
\hfil
\psfrag{EE}[][][0.7]{EE [Mbits/J]}
        \psfrag{NoAP}[][][0.7]{Number of APs}
        \psfrag{C}[][][0.7]{$C$ [bits/s/Hz]}
        \psfrag{EMCFxirt1}[][][0.61]{\ \ \ \ \ EMCF $\!\xi_{r}\!=\!\xi_{t}\!=\!1\!$}
        \psfrag{ECFUBxirt1}[][][0.61]{\ \ \ \ \ \ ECF\textsubscript{UB} $\!\xi_{r}\!=\!\xi_{t}\!=\!1\!$}
        \psfrag{CFExirt1}[][][0.61]{\ \ \ \ \ \ CFE $\!\xi_{r}\!=\!\xi_{t}\!=\!1\!$}
        \psfrag{EMCFxirt09}[][][0.61]{\ \ EMCF $\!\xi_{r}\!=\!\xi_{t}\!=\!0.9\!$}
        \psfrag{ECFUBxirt09}[][][0.61]{\ \ \ \ \ \ \ ECF\textsubscript{UB} $\!\xi_{r}\!=\!\xi_{t}\!=\!0.9\!$}
        \psfrag{CFExirt09}[][][0.61]{CFE $\!\xi_{r}\!=\!\xi_{t}\!=\!0.9\!$}
        \psfrag{0}[][][0.65]{$0$}
        \psfrag{20}[][][0.65]{$20$}
        \psfrag{50}[][][0.65]{$50$}
        \psfrag{100}[][][0.65]{$100$}
        \psfrag{25}[][][0.65]{$25$}
        \psfrag{5}[][][0.65]{$5$}
        \psfrag{10}[][][0.65]{$10$}
        \psfrag{15}[][][0.65]{$15$}
\subfloat[]{\includegraphics[scale=0.34]{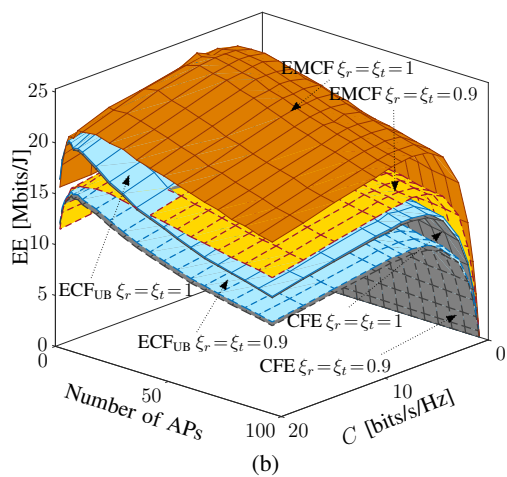}
\label{fig:EEvsMCt}}
\hfil
\caption{Impact of the number of APs and fronthaul capacity on: (a) SSE, and (b) EE.}
\label{fig:SEEEvsMCt}
\end{figure}

\begin{figure}[!t]
\centering
        \psfrag{SE}[][][0.7]{SSE [bits/s/Hz]}
        \psfrag{NoUE}[][][0.7]{Number of UEs}
        \psfrag{NoAP}[][][0.7]{Number of APs}
        \psfrag{EMCFxirt1}[][][0.61]{\ \ \ \ \ \ \ \ EMCF $\!\xi_{r}\!=\!\xi_{t}\!=\!1\!$}
        \psfrag{ECFUBxirt1}[][][0.61]{\!\!\! ECF\textsubscript{UB} $\!\xi_{r}\!=\!\xi_{t}\!=\!1\!$}
        \psfrag{ECFxirt1}[][][0.61]{\ \ \ \ \ \ \ \ \ CFE $\!\xi_{r}\!=\!\xi_{t}\!=\!1\!$}
        \psfrag{EMCFxirt09}[][][0.61]{\ \ \ EMCF $\!\xi_{r}\!=\!\xi_{t}\!=\!0.9\!$}
        \psfrag{ECFUBxirt09}[][][0.61]{\ \ \ ECF\textsubscript{UB} $\!\xi_{r}\!=\!\xi_{t}\!=\!0.9\!$}
        \psfrag{CFExirt09}[][][0.61]{\ \ \ \ \ \ \ \ \ CFE $\!\xi_{r}\!=\!\xi_{t}\!=\!0.9\!$}
        \psfrag{CFExirt1}[][][0.61]{\ CFE $\!\xi_{r}\!=\!\xi_{t}\!=\!1\!$}
        \psfrag{0}[][][0.65]{$0$}
        \psfrag{20}[][][0.65]{$20$}
        \psfrag{40}[][][0.65]{$40$}
        \psfrag{60}[][][0.65]{$60$}
        \psfrag{80}[][][0.65]{$80$}
        \psfrag{100}[][][0.65]{$100$}
        \psfrag{50}[][][0.65]{$50$}
        \psfrag{5}[][][0.65]{$5$}
        \psfrag{10}[][][0.65]{$10$}
        \psfrag{15}[][][0.65]{$15$}
        \subfloat[]{\includegraphics[scale=0.34]{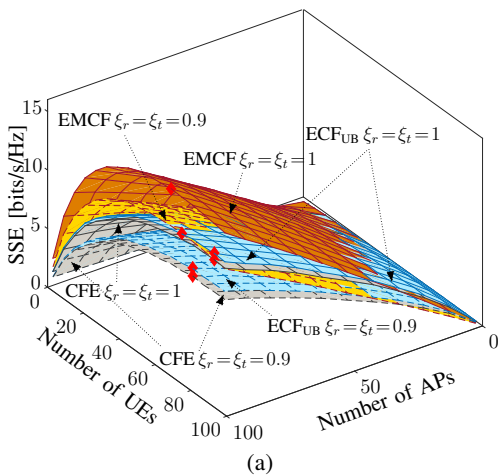}
\label{fig:SEvsUeAp}}
\hfil
        \psfrag{EE}[][][0.7]{EE [Mbits/J]}
        \psfrag{NoAP}[][][0.7]{Number of APs}
        \psfrag{NoUE}[][][0.7]{Number of UEs}
        \psfrag{EMCFxirt1}[][][0.61]{\ \ \ \ \ \ EMCF $\!\xi_{r}\!=\!\xi_{t}\!=\!1\!$}
        \psfrag{ECFUBxirt1}[][][0.61]{\ \ \ \ \ \ ECF\textsubscript{UB} $\!\xi_{r}\!=\!\xi_{t}\!=\!1\!$}
        \psfrag{CFExirt1}[][][0.61]{\ \ \ \ \ \ CFE $\!\xi_{r}\!=\!\xi_{t}\!=\!1\!$}
        \psfrag{EMCFxirt09}[][][0.61]{\ \ EMCF $\!\xi_{r}\!=\!\xi_{t}\!=\!0.9\!$}
        \psfrag{ECFUBxirt09}[][][0.61]{\ \ \ \ \ \ \ ECF\textsubscript{UB} $\!\xi_{r}\!=\!\xi_{t}\!=\!0.9\!$}
        \psfrag{CFExirt09}[][][0.61]{CFE $\!\xi_{r}\!=\!\xi_{t}\!=\!0.9\!$}
        \psfrag{0.1}[][][0.65]{$2$}
        \psfrag{0.2}[][][0.65]{$4$}
        \psfrag{0.3}[][][0.65]{$6$}
        \psfrag{0.4}[][][0.65]{$8$}
        \psfrag{0.5}[][][0.65]{$10$}
        \psfrag{0.6}[][][0.65]{$12$}
        \psfrag{0.7}[][][0.65]{$14$}
        \psfrag{0}[][][0.65]{$0$}
        \psfrag{50}[][][0.65]{$50$}
        \psfrag{100}[][][0.65]{$100$}
\subfloat[]{\includegraphics[scale=0.34]{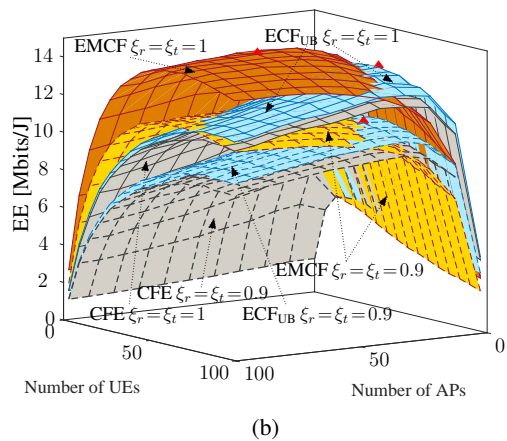}
\label{fig:EEvsUeAp}}
\hfil
\caption{Impact of the number of APs and UEs on (a) SSE, and (b) EE.}
\label{fig:SEEEvsUeAp}
\end{figure}

Fig.~12(a) and Fig.~12(b) investigate the joint effect of the fronthaul capacity and number of the APs on the SSE and EE of the system for $K=20$. Increasing number of the APs or fronthaul capacity always improves the SSE however from the EE viewpoint increasing number of the APs or capacity of the fronthaul links does not necessarily enhance the EE due to the increase in system power consumption.

Fig.~13(a) and Fig.~13(b) shows the SSE and EE as a function of the number of UEs and APs for $C=1$. As depicted increasing the number of antennas improve the SE constantly however, in terms of the EE this is not the case and EE reaches its maximal point for a moderate number of the APs. Also, there is optimal number of UEs to serve by the CF-mMIMO system.

\section{Conclusions}\label{sec6conclude}

We considered the uplink scenario of a fronthaul-constrained CF-mMIMO system in presence of non-negligible residual hardware impairments at the UEs and APs. To manage the limitations due to the finite capacity fronthaul link, three strategies were employed. Closed-form expressions for achievable data rates with EMCF and CFE were derived, and upper and lower bounds for ECF were proposed which were tight enough especially for perfect hardware and FHLs. Besides, by use of the proposed low-complexity fronthaul capacity allocation, EMCF outperformed the other two strategies. Moreover, it was proven that at hight SNR regime and for high fronthaul capacity, estimating channels at CU could result in lower estimation errors than that of APs. For a single user CF-mMIMO, the result indicated that at high SNR, ECF strategy outperformed CFE one. Also GP power control was developed to improve the SSE of the system for CFE and ECF strategies. Finally, the sum SE and EE of the system were studied through the numerical results to highlight the performance characteristics of the system.


\section*{Appendix A}\label{app:CFE}
\vspace{-.41cm}
After sequences of mathematical manipulations it can be shown that
\begin{equation*}\label{eqn:DSkCFE}
\begin{split}
  {\!\lvert\text{DS}_{k}\rvert}^2 = \rho_{u}\eta_{k}\xi_{r}\xi_{t}\left(\!\sum\limits_{m=1}^{M}\!\!\gamma_{mk}\!\!\right)^2\!, \ \ \ \ \ \mathbb{E}\!\left\lbrace{\!\lvert\text{RN}_{k}\rvert}^2\right\rbrace\!\! = \!\!N\sum\limits_{m=1}^{M}\!\!\gamma_{mk}, \ \ \ \ \ \mathbb{E}\!\left\lbrace{\!\lvert\text{QN}_{k}\rvert}^2\right\rbrace\!\! =\!\! \sum\limits_{m=1}^{M}\!\!Q_{d,m}\gamma_{mk},
  \end{split}
\end{equation*}
\vspace{-0.6cm}
\begin{equation*}
\begin{split}
  \mathbb{E}\left\lbrace{\!\lvert\text{BU}_{k}\rvert}^2\right\rbrace = \rho_{u}\eta_{k}\xi_{r}\xi_{t}\Bigg[\sum\limits_{m=1}^{M}\gamma_{mk}\beta_{mk} + \frac{1-\xi_{t}}{\tau\xi_{t}}\left(\sum\limits_{m=1}^{M}\gamma_{mk}\right)^2 + \rho_{p}(1-\xi_{r})\sum\limits_{m=1}^{M}\lambda_{mk}^{2}\beta_{mk}^{2}\Bigg],
  \end{split}
\end{equation*}
\vspace{-0.5cm}
\begin{equation*}
\begin{split}
  \mathbb{E}\left\lbrace{\!\lvert\text{IUI}_{kk^{\prime}}\rvert}^2\right\rbrace\! = \! \rho_{u}\eta_{k^{\prime}}\xi_{r}\xi_{t}\!\Bigg[\!\sum\limits_{m=1}^{M}\gamma_{mk}\beta_{mk^{\prime}}\! +\! \frac{1-\xi_{t}}{\tau\xi_{t}}\!\left(\sum\limits_{m=1}^{M}\gamma_{mk}\frac{\beta_{mk^{\prime}}}{\beta_{mk}}\!\right)^2\!\!\!\!+\! \rho_{p}(1-\xi_{r})\sum\limits_{m=1}^{M}\lambda_{mk}^{2}\beta_{mk^{\prime}}^{2}\!\Bigg],
  \end{split}
\end{equation*}
\vspace{-0.5cm}
\begin{equation*}
  \begin{split}
\hspace{-1cm}     \mathbb{E}\left\lbrace{\!\lvert\text{THI}_{kk^{\prime}}\rvert}^2\right\rbrace =& \rho_{u}\eta_{k^{\prime}}\xi_{r}(1-\xi_{t})\Bigg[\sum\limits_{m=1}^{M}\gamma_{mk}\beta_{mk^{\prime}} + \left(\boldsymbol{\varphi}_{k}^{H}\boldsymbol{\varphi}_{k^{\prime}} + \frac{1-\xi_{t}}{\tau\xi_{t}}\right)\!\!\left(\sum\limits_{m=1}^{M}\gamma_{mk}\frac{\beta_{mk^{\prime}}}{\beta_{mk}}\right)^2 \\
       & + \rho_{p}(1-\xi_{r})\sum\limits_{m=1}^{M}\lambda_{mk}^{2}\beta_{mk^{\prime}}^{2}\Bigg],
  \end{split}
\end{equation*}
\vspace{-0.2cm}
\begin{equation*}
  \begin{split}
\hspace{-0cm}     \mathbb{E}\left\lbrace{\!\lvert\text{RHI}_{k}\rvert}^2\right\rbrace =& \sum\limits_{k^{\prime} = 1}^{K}\rho_{u}\eta_{k^{\prime}}(1-\xi_{r})\Bigg[\sum\limits_{m=1}^{M}\gamma_{mk}\beta_{mk^{\prime}} + \rho_{p}\xi_{r}\Big(\tau\xi_{t}\lvert\boldsymbol{\varphi}_{k}^{H}\boldsymbol{\varphi}_{k^{\prime}}\rvert + (1-\xi_{t})\Big)\!\!\sum\limits_{m=1}^{M}\lambda_{mk}^{2}\beta_{mk^{\prime}}^{2} \\
       & + \rho_{p}(1-\xi_{r})\sum\limits_{m=1}^{M}\lambda_{mk}^{2}\beta_{mk^{\prime}}^{2}\Bigg].
  \end{split}
\end{equation*}
\section*{Appendix B}\label{app:ECF_UpperBound}
\vspace{-.41cm}
By plugging in $\hat{g}_{mk}^{*}$ instead of $\tilde{g}_{mk}^{*}$ in (\ref{eqn:MRC_CFE}), and after sequences of mathematical manipulations it can be shown that
\begin{equation*}
\begin{split}
  {\!\lvert\text{DS}_{k}\rvert}^2 = \rho_{u}\eta_{k}\xi_{r}\xi_{t}\left(\!\sum\limits_{m=1}^{M}\!\!\gamma_{mk}^{\prime
  }\!\!\right)^2\!, \ \ \ \ \ \mathbb{E}\!\left\lbrace{\!\lvert\text{RN}_{k}\rvert}^2\right\rbrace\!\! = \!\!N\sum\limits_{m=1}^{M}\!\!\gamma_{mk}^{\prime
  }, \ \ \ \ \ \mathbb{E}\!\left\lbrace{\!\lvert\text{QN}_{k}\rvert}^2\right\rbrace\!\! =\!\! \sum\limits_{m=1}^{M}\!\!Q_{d,m}\gamma_{mk}^{\prime},
  \end{split}
\end{equation*}

Above obtained relations have exact values. For other parts we obtain upper bounds. Hence, computed rate will be a lower-bound on the achievable rate.
\begin{equation}\label{BUup}
 \mathbb{E}\!\left\lbrace{\!\lvert\text{BU}_{k}\rvert}^2\right\rbrace\! =\! \rho_{u}\eta_{k}\xi_{r}\xi_{t}\!\Bigg[\!\sum\limits_{m=1}^{M}\!\!\left(\mathbb{E}\!\left\lbrace{\!\lvert g_{mk}\hat{g}_{mk}^{*}\rvert}^2\right\rbrace\!-\!\gamma_{mk}^{{\prime}^{2}}\right)
\!+\!\sum\limits_{m=1}^{M}\!\sum\limits_{n\neq m}^{M}\!\!\left(\mathbb{E}\left\lbrace{ g_{mk}\hat{g}_{mk}^{*}g_{nk}^{*}\hat{g}_{nk}}\right\rbrace\! - \!\gamma_{mk}^{\prime
}\gamma_{nk}^{\prime}\right)\!\!\Bigg].
\end{equation}

Upper bounds for $\mathbb{E}\!\left\lbrace{\!\lvert g_{mk}\hat{g}_{mk}^{*}\rvert}^2\right\rbrace$ and $\mathbb{E}\left\lbrace{ g_{mk}\hat{g}_{mk}^{*}g_{nk}^{*}\hat{g}_{nk}}\right\rbrace$ are as follows
\begin{equation}\label{e1}
\begin{split}
\mathbb{E}\!\left\lbrace{\!\lvert g_{mk}\tilde{g}_{mk}^{*}\rvert}^2\right\rbrace &=\! \mathbb{E}\!\left\lbrace{\!\lvert g_{mk}\hat{g}_{mk}^{*}\! +\! g_{mk}q_{p,mk}^{*}\rvert}^2\!\right\rbrace\!\! = \!\mathbb{E}\!\left\lbrace{\!\lvert g_{mk}\hat{g}_{mk}^{*}\rvert}^2\right\rbrace \!+ \! \mathbb{E}\!\left\lbrace{\!\lvert g_{mk}q_{p,mk}^{*}\rvert}^2\right\rbrace\! +\! \mathbb{E}\!\left\lbrace{\!\lvert g_{mk}\rvert}^{2}\hat{g}_{mk}^{*}q_{p,mk}\!\right\rbrace \\
& + \!\!\mathbb{E}\!\left\lbrace{\!\lvert g_{mk}\rvert}^{2}\hat{g}_{mk}q_{p,mk}^{*}\right\rbrace\!\! \underset{\text{(a)}}\geq\!\!  \mathbb{E}\!\left\lbrace{\!\lvert g_{mk}\hat{g}_{mk}^{*}\rvert}^2\right\rbrace\! +\! \mathbb{E}\!\left\lbrace{\!\lvert g_{mk}q_{p,mk}^{*}\rvert}^2\!\right\rbrace\! \underset{\text{(b)}}\geq \! \mathbb{E}\!\left\lbrace{\!\lvert g_{mk}\hat{g}_{mk}^{*}\rvert}^2\!\right\rbrace\! +\! Q_{p,mk}\beta_{mk}\\
&+\!Q_{p,mk}^{2} \Rightarrow \mathbb{E}\!\left\lbrace{\!\lvert g_{mk}\hat{g}_{mk}^{*}\rvert}^2\!\right\rbrace\! \leq \mathbb{E}\!\left\lbrace{\!\lvert g_{mk}\tilde{g}_{mk}^{*}\rvert}^2\right\rbrace - Q_{p,mk}\beta_{mk} - Q_{p,mk}^{2},
\end{split}
\end{equation}
where, (a) is due to $\mathbb{E}\!\left\lbrace{\!\lvert g_{mk}\rvert}^{2}\hat{g}_{mk}^{*}q_{p,mk}\right\rbrace\geq 0, \ \mathbb{E}\!\left\lbrace{\!\lvert g_{mk}\rvert}^{2}\hat{g}_{mk}q_{p,mk}^{*}\right\rbrace\geq 0$, and (b) results from $\mathbb{E}\!\left\lbrace{\!\lvert g_{mk}q_{p,mk}^{*}\rvert}^2\!\right\rbrace \geq \mathbb{E}\!\left\lbrace{\!\lvert \hat{g}_{mk}q_{p,mk}^{*}\rvert}^2\!\right\rbrace + \mathbb{E}\!\left\lbrace{\!\lvert e_{mk}\rvert}^2\!\right\rbrace\mathbb{E}\!\left\lbrace{\!\lvert q_{p,mk}^{*}\rvert}^2\!\right\rbrace + \mathbb{E}\!\left\lbrace{\!\lvert q_{mk}q_{p,mk}^{*}\rvert}^2\!\right\rbrace = Q_{p,mk}\beta_{mk}+\!Q_{p,mk}^{2}$, where we have assumed $g_{mk} = \underbrace{\hat{g}_{mk} + q_{p,mk}}_{\tilde{g}_{mk}} + e_{mk}$ in which $e_{mk}\sim\mathcal{CN}\left(0, \beta_{mk}-\gamma_{mk}\right)$ accounts for estimation error which is uncorrelated with $\tilde{g}_{mk}$ but not independent. Noting that, the inequality is due to the independency assumption.

Similarly, we have the following upper-bound for the second expectation in (\ref{BUup}),
\begin{equation}\label{e2}
\begin{split}
\mathbb{E}\!\left\lbrace g_{mk}\tilde{g}_{mk}^{*}g_{nk}^{*}\tilde{g}_{nk} \right\rbrace &=\! \mathbb{E}\!\left\lbrace g_{mk}(\hat{g}_{mk}^{*}+q_{p,mk}^{*})g_{nk}^{*}(\hat{g}_{nk}+q_{p,nk})\!\right\rbrace\!\! \geq \!\mathbb{E}\!\left\lbrace\!g_{mk}\hat{g}_{mk}^{*}g_{nk}^{*}\hat{g}_{nk}\right\rbrace \!+ \! Q_{p,mk}Q_{p,nk}\\
& \Rightarrow \mathbb{E}\!\left\lbrace\!g_{mk}\hat{g}_{mk}^{*}g_{nk}^{*}\hat{g}_{nk}\right\rbrace\! \leq \mathbb{E}\!\left\lbrace g_{mk}\tilde{g}_{mk}^{*}g_{nk}^{*}\tilde{g}_{nk} \right\rbrace - Q_{p,mk}Q_{p,nk},
\end{split}
\end{equation}
$\mathbb{E}\!\left\lbrace g_{mk}\tilde{g}_{mk}^{*}g_{nk}^{*}\tilde{g}_{nk} \right\rbrace$ and $\mathbb{E}\!\left\lbrace{\!\lvert g_{mk}\tilde{g}_{mk}^{*}\rvert}^2\right\rbrace$ could be computed in the same way as in Appendix A, then by replacing $\mathbb{E}\!\left\lbrace\!g_{mk}\hat{g}_{mk}^{*}g_{nk}^{*}\hat{g}_{nk}\right\rbrace$ and $\mathbb{E}\!\left\lbrace{\!\lvert g_{mk}\hat{g}_{mk}^{*}\rvert}^2\!\right\rbrace$ with their upper-bounds (\ref{e1}) and (\ref{e2}) in equation (\ref{BUup}) the following upper-bound can be obtained
\begin{equation*}
\begin{split}
  \mathbb{E}\left\lbrace{\!\lvert\text{BU}_{k}\rvert}^2\right\rbrace &= \rho_{u}\eta_{k}\xi_{r}\xi_{t}\Bigg[\sum\limits_{m=1}^{M}\gamma_{mk}^{\prime}\beta_{mk} + \frac{1-\xi_{t}}{\tau\xi_{t}}\left(\sum\limits_{m=1}^{M}\gamma_{mk}\right)^2 + \rho_{p}(1-\xi_{r})\sum\limits_{m=1}^{M}\lambda_{mk}^{2}\beta_{mk}^{2}\\
  &+2\left(\sum\limits_{m=1}^{M}Q_{p,mk}\right)\left(\sum\limits_{m=1}^{M}\gamma_{mk}^{\prime}\right)\Bigg],
  \end{split}
\end{equation*}
One can obtain upper-bounds for other parts of the interference similarly,
\begin{equation*}
\begin{split}
  \mathbb{E}\left\lbrace{\!\lvert\text{IUI}_{kk^{\prime}}\rvert}^2\right\rbrace &= \rho_{u}\eta_{k^{\prime}}\xi_{r}\xi_{t}\Bigg[\sum\limits_{m=1}^{M}\gamma_{mk}\beta_{mk^{\prime}} + \frac{1-\xi_{t}}{\tau\xi_{t}}\left(\sum\limits_{m=1}^{M}\gamma_{mk}\frac{\beta_{mk^{\prime}}}{\beta_{mk}}\right)^2 + \rho_{p}(1-\xi_{r})\sum\limits_{m=1}^{M}\lambda_{mk}^{2}\beta_{mk^{\prime}}^{2}\\
  &-\sum\limits_{m=1}^{M}Q_{p,mk}Q_{p,mk^{\prime}}\Bigg],
  \end{split}
\end{equation*}
\begin{equation*}
\begin{split}
\hspace{-1cm}\mathbb{E}\left\lbrace{\!\lvert\text{THI}_{kk^{\prime}}\rvert}^2\right\rbrace &= \rho_{u}\eta_{k^{\prime}}\xi_{r}(1-\xi_{t})\Bigg[\sum\limits_{m=1}^{M}\gamma_{mk}^{\prime}\beta_{mk^{\prime}} + \left(\boldsymbol{\varphi}_{k}^{H}\boldsymbol{\varphi}_{k^{\prime}} + \frac{1-\xi_{t}}{\tau\xi_{t}}\right)\!\!\left(\sum\limits_{m=1}^{M}\gamma_{mk}\frac{\beta_{mk^{\prime}}}{\beta_{mk}}\right)^2 \\
&+ \rho_{p}(1-\xi_{r})\sum\limits_{m=1}^{M}\lambda_{mk}^{2}\beta_{mk^{\prime}}^{2}-\sum\limits_{m=1}^{M}Q_{p,mk}Q_{p,mk^{\prime}}\Bigg],
  \end{split}
\end{equation*}
\begin{equation*}
\begin{split}
\hspace{-1cm}\mathbb{E}\left\lbrace{\!\lvert\text{RHI}_{k}\rvert}^2\right\rbrace =& \sum\limits_{k^{\prime} = 1}^{K}\rho_{u}\eta_{k^{\prime}}(1-\xi_{r})\Bigg[\sum\limits_{m=1}^{M}\gamma_{mk}^{\prime}\beta_{mk^{\prime}} + \rho_{p}\xi_{r}\Big(\tau\xi_{t}\boldsymbol{\varphi}_{k}^{H}\boldsymbol{\varphi}_{k^{\prime}} + (1-\xi_{t})\Big)\!\!\sum\limits_{m=1}^{M}\lambda_{mk}^{2}\beta_{mk^{\prime}}^{2} \\
& + \rho_{p}(1-\xi_{r})\sum\limits_{m=1}^{M}\lambda_{mk}^{2}\beta_{mk^{\prime}}^{2}-\rho_{p}(1-\xi_{r})\sum\limits_{m=1}^{M}Q_{p,mk}Q_{p,mk^{\prime}}\Bigg].
  \end{split}
\end{equation*}
\section*{Appendix C}\label{app:ECF_UpperBound}
\vspace{-.41cm}
Noting that $\mathbb{E}\{\tilde{g}_{mk}^{*}g_{mk}\} = \gamma_{mk}$, consequently $\boldsymbol{b}_{k} = \sqrt{\rho_{u}\eta_{k}\xi_{r}\xi_{t}}[\gamma_{1k}, \ \gamma_{2k}, \ ..., \ \gamma_{mk}, \ ..., \ \gamma_{Mk}]$. For the off-diagonal elements of $\mathcal{K}_{\boldsymbol{z}_k}$ we have

\begin{equation}\label{OffDiag}
  \begin{split}
     \mathcal{K}_{\boldsymbol{z}_{k}}[n,m]\!\! =& \mathcal{K}_{\boldsymbol{z}_{k}}[m,n] = \mathbb{E}\{\boldsymbol{z}_{k}[m,1]\boldsymbol{z}_{k}^{*}[n,1]\} = \xi_{r}\xi_{t}\eta_{k}\rho_{u}\Big[\mathbb{E}\{\tilde{g}_{mk}^{*}g_{mk}\tilde{g}_{nk}g_{nk}^{*}\}-\gamma_{mk}\gamma_{nk}\Big]\\
       +& \xi_{r}\xi_{t}\rho_{u}\sum\limits_{k^{\prime}\neq k}^{K}\eta_{k^{\prime}}\mathbb{E}\{\tilde{g}_{mk}^{*}g_{mk^{\prime}}\tilde{g}_{nk}g_{nk^{\prime}}^{*}\} + \xi_{r}\sum\limits_{k^{\prime
       } = 1}^{K}\mathbb{E}\{\tilde{g}_{mk}^{*}g_{mk^{\prime}}w_{t,k^{\prime}}\tilde{g}_{nk}g_{nk^{\prime}}^{*}w_{t,k}^{*}\}\\
        =& \xi_{r}\xi_{t}\eta_{k}\rho_{u}\Big[\!\gamma_{mk}\gamma_{nk}\!+\!\frac{1-\xi_{t}}{\tau\xi_{t}}\gamma_{mk}\gamma_{nk}\!-\!\gamma_{mk}\gamma_{nk}\Big]
       \!+\!\rho_{u}\xi_{r}\xi_{t}\!\!\sum\limits_{k^{\prime}\neq k}^{K}\!\!\eta_{k^{\prime}}\frac{1-\xi_{t}}{\tau\xi_{t}}\frac{\beta_{nk^{\prime}}\beta_{mk^{\prime}}}{\beta_{mk}\beta_{nk}}\gamma_{mk}\gamma_{nk}\\
       +&\rho_{u}\xi_{r}(1-\xi_{t})\sum\limits_{k^{\prime}=1}^{K}\eta_{k^{\prime}}\frac{1-\xi_{t}}{\tau\xi_{t}}\frac{\beta_{nk^{\prime}}\beta_{mk^{\prime}}}{\beta_{mk}\beta_{nk}}\gamma_{mk}\gamma_{nk}
       = \rho_{u}\xi_{r}\frac{1-\xi_{t}}{\tau\xi_{t}}\sum\limits_{k^{\prime}= 1}^{K}\eta_{k^{\prime}}\frac{\beta_{nk^{\prime}}\beta_{mk^{\prime}}}{\beta_{mk}\beta_{nk}}\gamma_{mk}\gamma_{nk}.
  \end{split}
\end{equation}
For the diagonal elements after some more mathematical calculations one can show that
\begin{equation}\label{Diag}
  \begin{split}
     \mathcal{K}_{\boldsymbol{z}_{k}}[m,m] \!\!=& \mathbb{E}\{{\lvert\boldsymbol{z}_{k}[m,1]\rvert}^{2}\} = \rho_{u}\eta_{k}\xi_{r}\xi_{t}\Big[\mathbb{E}\{{\lvert \tilde{g}_{mk}^{*}g_{mk}\rvert}^2\}\Big] + \rho_{u}\xi_{r}\xi_{t}\sum\limits_{k^{\prime}\neq k}^{K}\eta_{k^{\prime}}\mathbb{E}\{{\lvert \tilde{g}_{mk}^{*}g_{mk^{\prime}}\rvert}^2\}\\
     +& \rho_{u}\xi_{r}(1-\xi_{t})\sum\limits_{k^{\prime}=1}^{K}\eta_{k^{\prime}}\mathbb{E}\{{\lvert \tilde{g}_{mk}^{*}g_{mk^{\prime}}\rvert}^2\} + \mathbb{E}\{{\lvert \tilde{g}_{mk}^{*}w_{r,m}\rvert}^2\} + \mathbb{E}\{{\lvert \tilde{g}_{mk}^{*}n_{m}\rvert}^2\} + \mathbb{E}\{{\lvert q_{mk}\rvert}^2\}\\
     =& \rho_{u}\eta_{k}\xi_{r}\xi_{t}\Big[\gamma_{mk}\beta_{mk}-\frac{1}{\tau}\gamma_{mk}^{2}+\rho_{p}\lambda_{mk}^{2}\beta_{mk}^{2}\Big] + \rho_{u}\xi_{r}\xi_{t}\sum\limits_{k^{\prime}\neq k}^{K}\eta_{k^{\prime}}\Big(\gamma_{mk}\beta_{mk^{\prime}}-\frac{1}{\tau}\frac{\beta_{mk^{\prime}}^{2}}{\beta_{mk}^{2}}\gamma_{mk}^{2}\\ +& \rho_{p}\lambda_{mk}^{2}\beta_{mk^{\prime}}^{2}\Big)
     \!+\! \rho_{u}\xi_{r}(1\!-\!\xi_{t})\!\!\sum\limits_{k^{\prime}=1}^{K}\!\!\eta_{k^{\prime}}\Big(\gamma_{mk}\beta_{mk^{\prime}}\!+\!(\boldsymbol{\varphi}_{k}^{H}\!\boldsymbol{\varphi}_{k^{\prime}}\!-\!\frac{1}{\tau})\!\frac{\beta_{mk^{\prime}}^{2}}{\beta_{mk}^{2}}\gamma_{mk}^{2} \!+ \!\rho_{p}\lambda_{mk}^{2}\beta_{mk^{\prime}}^{2}\!\Big)\\
     +& \rho_{u}(1\!-\!\xi_{r})\!\!\sum\limits_{k^{\prime}=1}^{K}\!\!\eta_{k^{\prime}}\Big(\gamma_{mk}\beta_{mk^{\prime}}\!+\!(\boldsymbol{\varphi}_{k}^{H}\!\boldsymbol{\varphi}_{k^{\prime}}\!-\!\frac{1}{\tau})\frac{\beta_{mk^{\prime}}^{2}}{\beta_{mk}^{2}}\gamma_{mk}^{2} \!+ \!\rho_{p}\lambda_{mk}^{2}\beta_{mk^{\prime}}^{2}\!\Big)\! + \!N\gamma_{mk}\! + \!Q_{mk}.
  \end{split}
\end{equation}
By further simplifications the result in equation (\ref{eqn:RateWCF}) is obtained.
\bibliography{Refrences}
\bibliographystyle{ieeetr}

\end{document}